\newtheorem{theorem}{Theorem}[section]
\newtheorem{corollary}[theorem]{Corollary}
\newtheorem{lemma}[theorem]{Lemma}
\theoremstyle{definition}
\newtheorem{definition}[theorem]{Definition}
\newtheorem{example}[theorem]{Example}
\DeclareMathOperator{\Red}{Red}
\DeclareMathOperator{\Loc}{SLT}
\newcommand{\tagged}{\overline{\Red}}
\newcommand{\taggable}{\text{HOP}{}}
\newcommand{\maxtag}{\Red}
\newcommand{\ftagged}{\tagged}
\newcommand{\ftaggable}{\taggable}
\newcommand{\fmaxtag}{\maxtag}
\newcommand{\ltagged}{\overline{\Red}}
\newcommand{\ltaggable}{\Red}
\newcommand{\csharp}{\text{\textcircled{\#}}}
\title{Higher-Order Operator Precedence Languages}
\author{Stefano {Crespi Reghizzi}
	\institute{DEIB, Politecnico di Milano, and\\
    IEIIT, Consiglio Nazionale 
	delle Ricerche\\
		via Ponzio 34/5, 20134 Milano, Italy}
	\email{stefano.crespireghizzi@polimi.it}
\and
Matteo Pradella
\institute{DEIB, Politecnico di Milano, and\\ 
IEIIT, Consiglio Nazionale delle 
Ricerche\\
via Ponzio 34/5, 20134 Milano, Italy}
\email{matteo.pradella@polimi.it}
}
\begin{document}

\maketitle

\begin{abstract}
Floyd's Operator Precedence (OP) languages are a deterministic context-free
family having many desirable properties.  They are locally and parallely
parsable, and languages having a compatible structure are closed under Boolean
operations, concatenation and star; they properly include the family of Visibly Pushdown (or Input
Driven) languages.  OP languages are based on three
relations between any two consecutive terminal symbols, which assign syntax
structure to words.  We extend such relations to $k$-tuples of consecutive terminal
symbols, by using the model of strictly locally testable regular languages of
order $k \ge 3$.  The new corresponding class of {\em Higher-order
Operator Precedence languages} (HOP) properly includes the OP languages, and it is
still included in the deterministic (also in reverse) context free family. 
We prove Boolean closure for each subfamily of structurally compatible HOP languages.
In each subfamily, the top language is called {\em max-language}. We show that 
such languages are defined by a simple cancellation rule and we  
prove several properties, in particular that max-languages make an infinite hierarchy ordered by parameter $k$.
HOP languages are a candidate for replacing OP languages in the various applications where 
they have have been successful though sometimes too restrictive.

\end{abstract}


\section{Introduction}\label{sect:introduction}
\par
We propose a new way of extending the classic language family of \emph{operator-precedence} (OP) languages, invented by R. Floyd \cite{Floyd1963} to design a very efficient parsing algorithm,  still used within compilers.
It is worth outlining the main characteristics of OP languages.
OP languages have been also exploited for 
grammar inference \cite{journals/csur/AngluinS83}, thanks to their 
lattice-theoretical properties.  They offer promise for model-checking of 
infinite-state systems 
due to the Boolean closure, $\omega$-languages, logic and automata characterizations, and the ensuing decidability of 
relevant problems \cite{LonatiEtAl2015}. Recently, a generator of fast parallel parsers
 has been made available \cite{BarenghiEtAl2015}.
Their bottom-up deterministic parser localizes the edges of the handle (a 
factor to be reduced by a grammar rule)  by means of three precedence 
relations, represented by the \emph{tags}  $\lessdot, \gtrdot, \dot=$.
(Since our model generalizes OP, we represent the tags as 
$[, ], \odot$.) Such 
relations are defined  between two consecutive terminals (possibly separated by a 
nonterminal). E.g., the \emph{yield precedence} relation $a \lessdot b$ says 
that $b$ is the leftmost terminal of the handle and $a$ is the last terminal of 
the left context. The no-conflict condition of OP grammars ensures that the edge 
positions are unambiguous and the handles can be localized by means of a local 
test. 
An OP parser configuration is essentially  a word consisting of alternated 
terminals and tags, i.e., a \emph{tagged word}; notice that nonterminal 
symbols, although present in the configuration, play no role in determining the 
handle positions, but are of course necessary for  checking syntactic 
correctness.
In general, any language having the property that handles can be localized by a local test is  called \emph{locally parsable} and its parser is amenable to parallelization.
\par
If the parser is abstracted as a pushdown automaton, each pair of terminals  
associated to a left or to a right edge of a handle, respectively triggers a 
push or a pop move; i.e.,  the move choice is driven by two consecutive input 
symbols. Therefore, the well-known model  of \emph{input-driven} 
\cite{DBLP:conf/icalp/Mehlhorn80,Input-driven} (or ``visibly pushdown'' 
\cite{jacm/AlurM09}) languages is a special case of the OP model, since just 
one  terminal suffices to choose the move. This is shown in   
\cite{CrespiMandrioli12}, where the precedence relations characterizing  the 
input-driven languages are computed.
 The syntax structures permitted by such relations are sometimes too restrictive for the constructs of modern languages, e.g., a markup language like HTML5 has special rules that allow dropping some closing tags.
\par
Since OP grammars, although used by compilers, are sometimes inconvenient or inadequate for specifying some syntactic constructs, a natural question is: can we increase the generative capacity of OP grammars, without jeopardizing their nice properties, by allowing the parser to examine more than two consecutive  terminals to determine the handle position? 
Quite surprisingly, to our knowledge the  question remained  unanswered until now, but in the last section we mention some related research.
\par
We intuitively present the main ideas of the new hierarchical family of languages and grammars called \emph{Higher-order Operator Precedence} (HOP). Let $k\geq 3$ be and odd integer specifying the number of consecutive terminals and intervening tags to be used for localizing handles: the value of $k$ is 3 for OP, which thus coincide with the HOP(3) subfamily. 
The main contributions are: a precise definition of HOP($k$) grammars,  a decidable condition for testing whether a grammar has the  HOP($k$) property, the proof that the OP family is properly included into the 
HOP one, and an initial set of nice properties that carry over from OP to HOP. 
The Boolean closure of each  structurally compatible (this concept cannot be 
defined at this point but is standard for OP and input-driven languages) HOP 
subfamily is determinant for model checking.
Concerning local parsability, we mention in the conclusions how it should be 
obtained. 
Last but not least, our definition of HOP grammars permits to use regular expressions 
in the right part of rules, in contrast with the classical definition of OP grammars.

Moreover, we prove that each structurally compatible HOP subfamily  has a maximal element, called max-language.
Interestingly, max-languages can be defined by a simple cancellation rule that applies to tagged words, and iteratively deletes innermost handles by a process called a \emph{reduction}. Before each cancellation, the word, completed with tags, has to pass local tests, defined by means of a \emph{strictly locally testable} \cite{McNaughtonPapert71} regular language of order $k$.  We prove several properties of the max-language family, in particular that they form a strict infinite hierarchy ordered by parameter $k$.
Since the model based on cancellation is simpler, it will be the first  presented in this paper, before the HOP grammar model.

 \par
Paper organization:
Section~\ref{sect:basicDef} contains the basic notation and definitions. 
Section~\ref{sect:SLRlanguage} introduces  the max-languages and their basic properties. 
 Section~\ref{sect:generalizationOpPrecLang}   defines the HOP grammars and proves their properties. 
Section~\ref{sect:RelatedWorkConcl} compares HOP with some related existing 
models, and  lists open problems and  future research directions.

\section{Basic definitions}\label{sect:basicDef}
For terms not defined here, we refer to any textbook on formal languages, e.g. 
\cite{Harrison78}. 
For a generic alphabet we use the symbol $\Upsilon$. 
The empty word is denoted by $\varepsilon$. Unless stated otherwise, all 
languages considered are free from the empty word. 
For any $k \ge 1$, for a word $w$, $|w|\ge k$, let $i_k(w)$ and $t_k(w)$ be the prefix and, 
respectively, the suffix of $w$ of length $k$.
If a word $w$ has length at least $k$, $f_k(w)$ denotes the set of 
factors of $w$ of length $k$, otherwise the empty set. Obviously, $i_k(w)$, 
$t_k(w)$  and  $f_k$ can be extended to languages. The $i$-th 
character of $w$ is denoted by $w(i), 1 \le i \le |w|$.
 
\par
A (nondeterministic) \emph{finite automaton} (FA) is denoted by 
$M=(\Upsilon,Q,  \delta, I, T)$, where  $I, T \subseteq Q$  are respectively 
the initial and final states and $\delta$ is a relation (or its graph) over $Q 
\times \Upsilon \times Q$.
A (labeled) \emph{path}  is a sequence  
$q_1\stackrel {a_1} \to q_2 \stackrel {a_2} \to \dots \stackrel {a_{n-1}} \to q_n$,
such that, for each $1 \le i < n$,  $(q_i, a, q_{i+1}) \in \delta$. The {\em path label} is $a_1 a_2 \dots a_{n-1}$,
the {\em path states} are the sequence $q_1 q_2 \ldots q_n$.
An FA is \emph{unambiguous} if  each sentence in $L(M)$  is recognized by  just one computation.
\par
  An \emph{extended context-free} (ECF) grammar is a 4-tuple $G=(V_N,\Upsilon, 
  P, 
  S)$, where $\Upsilon$ is the terminal alphabet, $V_N$ is the nonterminal 
  alphabet, $P$ is  the set of rules, and  
  $S\subseteq V_N$ is the set of axioms.  
	Each rule has the form $X \to R_X$, where $X \in V_N$ and $R_X$ 
  is a \emph{regular} language  over the alphabet $V = \Upsilon \cup V_N$. 
  $R_X$ will be  defined 
  by means of an unambiguous 
   FA,  $M_X=(V, Q_X, \delta_X, I_X , T_X)$. We safely assume that for each nonterminal $X$ there is exactly one rule, to be written as $X \to M_X$ or $X \to R_X$.
   A rule $X \to R_X$ is a {\em copy rule} if  $\exists Y \in V_N: Y \in R_X$;
  we  assume that there are no copy rules.
A \emph{context-free} (CF) grammar is an ECF grammar such that for each rule $X \to R_X$, $R_X$ is a finite language over $V$.
\par
The  \emph{derivation relation} $\Rightarrow\, \subseteq V^* \times  V^*$ is defined as follows for an ECF grammar:
 $u  \Rightarrow v$ if $u = u' X u''$,  $v = u' w u''$, $X \to R_X\in P$, and 
 $w \in R_X$.
\par
A word is $X$-\emph{grammatical} if it derives from a nonterminal symbol $X$. 
If $X$ is an axiom, the word is \emph{sentential}. The language generated  by 
$G$ starting from a nonterminal $X$ is denoted by $L(G,X) \subseteq \Upsilon^+$ 
and   $L(G) = 
\bigcup_{X  \in S}{L(G,X)}$.
\par
The usual assumption that all parts of a CF grammar are productive  can be reformulated for ECF grammars by combining reduction (as in a CF grammar) and trimming of the $M_X$ FA for each rule $X \to M_X$, but we omit details for brevity.

\par
An ECF grammar is in \emph{operator} (normal) \emph{form} if for all rules 
$X\to R_X$ and for each  $x \in R_X$, $f_2(x) \cap V_N V_N=\emptyset$, i.e. it 
is impossible to find two adjacent nonterminals.
Throughout the paper we only consider ECF grammars in operator form.


\par 
Let $G=(V_N, \Upsilon, P, S)$ and assume that  $\{(, )\} \cap \Upsilon
= \emptyset$. The \emph{parenthesis grammar} $G_{(\,)}$ is defined by the
4-tuple $(V_N, \Upsilon \cup \{(, )\}, P', S)$  where  $P'= \{X \to (R_X) \mid 
X 
\to R_X \in P\}$. 
Let $\sigma'$  be the  homomorphism  which 
erases parentheses, a grammar $G$ is \emph{structurally
ambiguous} if there exist $w,z \in L(G_{()}), w \ne z$, such that $\sigma'(w) = 
\sigma'(z)$.
Two grammars $G'$ and $G''$ are
\emph{structurally equivalent} if $L(G'_{(\,)})=L(G''_{(\,)})$.

\medskip
\noindent
{\bf Strict local testability and tagged languages }
Words of length $k$ are  called $k$-\emph{words}. 
The following definition, equivalent to the classical ones (e.g., in 
\cite{McNaughtonPapert71},\cite{DBLP:journals/tcs/Caron00a}), assumes that any 
input word $x \in \Upsilon^+$ is enclosed between two special words of 
sufficient length, called 
{\em end-words} and 
denoted by $\csharp$.
Let $\#$ be a character, tacitly assumed to be in $\Upsilon$ and used only in 
the end-words. We actually use two different end-words, without or with tags, 
depending on the 
context: $\csharp \in \#^+$ (e.g. in Definition~\ref{defSLT})
or 
$\csharp \in (\# \odot )^* \#$, (e.g. in 
Definition~\ref{defLSTlist}).

\begin{definition}\label{defSLT}
Let $k\geq 2$ be an integer, called \emph{width}. A language $L$  is 
$k$-\emph{strictly locally testable}, if there exists a 
$k$-word set $F_k \subseteq \Upsilon^k$ 
such that
$ L= \{x \in \Upsilon^* \mid f_k\left(\csharp\, x \,  \csharp\right)   
\subseteq F_k \} $;
then we write $L= \Loc(F_k)$.
A language is \emph{strictly locally testable} (\emph{SLT}) if it is $k$-strictly locally testable for some $k$. 
\qed
\end{definition}

We assume that the three characters, called \emph{tags},  $[, ]$, and $\odot$  
are distinct from terminals and nonterminals characters and we denote them as 
$\Delta =\{ [, ], \odot 
\}$.
For any alphabet, the projection $\sigma$ erases all the tags, i.e. 
$\sigma(x) = \varepsilon$, if $x \in \Delta$, otherwise $\sigma(x) = x$.
Here we apply the SLT definition to words that contain tags and are 
the base of our models.
Let $\Sigma$ be the terminal alphabet. 
A \emph{tagged word} starts with a terminal and
alternates tags and terminals.

\begin{definition}[tagged word and tagged language]\label{defLSTlist}
Let here and throughout the paper $k\geq 3$ be an odd integer.
A \emph{tagged word}  is a word   $w$ in the set $\Sigma(\Delta\Sigma)^*$, denoted by $\Sigma^\square$. 
A \emph{tagged sub-word} of $w$ is a factor of $w$ that is a tagged word.
A \emph{tagged language} is a set of tagged words.
Let $\Sigma^{\square k}=\{w \in \Sigma^\square \mid |w| =k \}$. We call \emph{tagged} $k$-\emph{word} any word in $\Sigma^{\square k}$.
The set of all tagged $k$-words that occur in $w$ is denoted by $\varphi_k(w)$.
\par
A  language $L\subseteq \Sigma^\square$  is a 
$k$-\emph{strictly locally testable tagged language} 
if there exists a set of tagged $k$-words $\Phi_k \subseteq \Sigma^{\square k}$  such that
$L= \left\{w \in \Sigma^\square \mid \varphi_k\left(\csharp \, [ \, \, w \, ] \, \csharp  \right)  \subseteq \Phi_k \right\}. $ 
In that case we write $L=\Loc(\Phi_k)$. 
The $k$-word set  $F_k\subseteq (\Sigma \cup \Delta)^k$ 
\emph{derived from}  $\Phi_k$ is $F_k= \bigcup_{x \in \Loc(\Phi_k)}  f_k(x)$.
\par
A tagged $k$-word set $\Phi_k$ is \emph{conflictual} if, and only if, $\exists 
x, y \in \Phi_k, x \ne y$, such that $\sigma(x) = \sigma(y)$.
\qed
\end{definition}
\par
 E.g., $\Loc(\{\#  [  a, \  a\odot b,\  b \odot a ,\  a] \#\}) = (a \odot b\,\odot)^* a$.

We  observe that, for each word $w\in\Sigma^\square$, the set  $\varphi_k(w)$ is included in $f_k(w)$.
E.g., from $\Phi_3= \{\#  [ a,\  a\odot b,\  b \odot a,\  a]\#\}$ we derive the 3-word set $F_3=\Phi_3 \cup \{ [  a\odot,\  [ a ],\  \odot b \odot,\  \odot a \odot,\ \odot a ]\}$.
Yet, although $\Phi_k\subset F_k$, the languages defined by strict local 
testing obviously coincide:  $\Loc(F_k) = \Loc(\Phi_k)$.

In what follows all tagged word sets considered are not conflictual, unless 
stated otherwise.
An important remark is that for every word $w$ over $\Sigma$, $\sigma^{-1}(w) 
 \cap \Loc(\Phi_k)$ is either empty or a singleton: the tagged word 
 corresponding to $w$.
\par
The following technical lemma is useful for later proofs.
\begin{lemma}\label{lemmaSLTlistLang}
	Let $w \in \Sigma^{\square k}$; let  $s', s'' \in \Delta$ be two distinct tags.
	Then,
    for 
	every $3 \le h \le k+2$,
    the tagged word set $\varphi_{h} (w s' w s'' w)$ is conflictual.
	
\end{lemma}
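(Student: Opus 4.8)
The plan is to reformulate the claim as the existence, for each admissible $h$, of two \emph{distinct} tagged $h$-words occurring in $W := w\, s'\, w\, s''\, w$ that share the same image under the tag-erasing projection $\sigma$. Since every tagged word has odd length, only odd $h$ are of interest (for even $h$ the set $\Sigma^{\square h}$, and hence $\varphi_h(W)$, is empty), so I assume $h$ odd. The structural fact driving the whole argument is that $W$ contains the two factors $w\, s'\, w$ and $w\, s''\, w$, occupying the position ranges $[1,2k+1]$ and $[k+2,3k+2]$, and that these two factors are translates of one another (by $k+1$) that coincide symbol-by-symbol \emph{except} at their central tag, where one carries $s'$ and the other $s''$. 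This is immediate, since each factor has the shape ``$w$, then a single tag, then the same $w$'', and $s'\neq s''$.

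Next I would exhibit the two witnesses explicitly. Recall that in $W$ the terminals occupy the odd positions and the tags the even ones, and that the central tag $s'$ of $w\,s'\,w$ sits at position $k+1$. For a given odd $h$ with $3\le h\le k+2$, let $u_1$ be the length-$h$ factor ending at position $k+2$ (the first terminal of the second copy of $w$), i.e.\ the factor on positions $[\,k+3-h,\ k+2\,]$. Its starting position $k+3-h$ is odd, since both $k$ and $h$ are odd, so $u_1$ begins with a terminal and is a genuine tagged $h$-word; the bound $h\le k+2$ gives $k+3-h\ge 1$, keeping $u_1$ inside $w\,s'\,w$, and $h\ge 3$ ensures that $u_1$ covers position $k+1$ and hence contains $s'$. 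Let $u_2$ be the translate of $u_1$ by $k+1$, namely the factor on positions $[\,2k+4-h,\ 2k+3\,]$; the same bounds make it a tagged $h$-word lying inside $w\,s''\,w$, and the translate sends the position carrying $s'$ exactly to the central tag $s''$.

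Finally I would conclude. Both $u_1$ and $u_2$ belong to $\varphi_h(W)$. By the coincidence noted in the first step, they occupy matching local indices in $w\,s'\,w$ and $w\,s''\,w$ respectively, so they agree in every position except the one where $u_1$ has $s'\in\Delta$ and $u_2$ has $s''\in\Delta$; in particular they carry tags in exactly the same relative slots and identical terminals everywhere. Hence $u_1\neq u_2$ while $\sigma(u_1)=\sigma(u_2)$, and therefore $\varphi_h(W)$ is conflictual. I expect the only delicate point to be the index bookkeeping of the second paragraph: one must verify, uniformly over the whole range $3\le h\le k+2$, that the chosen window simultaneously starts on a terminal (a parity argument using that $k$ and $h$ are odd) and stays within a single ``$w\cdot\text{tag}\cdot w$'' block while still covering its central tag — and it is exactly here that the upper bound $k+2$ on $h$ is consumed.
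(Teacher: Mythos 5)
Your proof is correct and takes essentially the same route as the paper: your witnesses $u_1$ and $u_2$ are exactly the paper's $t_h(a_1 s_2 \ldots a_k\, s'\, a_1)$ and $t_h(a_1 s_2 \ldots a_k\, s''\, a_1)$, i.e., the length-$h$ windows ending at the first terminal following $s'$ and $s''$ respectively. You merely make explicit the parity and range bookkeeping that the paper leaves implicit.
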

\begin{proof} 
	Let $w= a_1 s_2 a_3 \ldots s_{k-1} a_k$. It suffices to observe that the 
	conflicting tagged $h$-words 
	$t_h(a_1 s_2 a_3 \ldots  s_{k-1}$ $ a_k s'  a_1)$   and 
	$t_h(a_1 s_2 a_3 \ldots  s_{k-1} a_k s'' a_1)$
	are contained in $\varphi_{h} (w s' w s'' w)$.
\end{proof}
An immediate corollary: when $w = a\in \Sigma$, for any sufficiently long word  
$z \in a (\Delta a)^*$, if $z$ contains two distinct tags,
the set $\varphi_k(z)$ is conflictual.

\section{Reductions and maximal languages}\label{sect:SLRlanguage}
We show that the SLT tagged words, defined by a set $\Phi$ of 
(non-conflictual) $k$-words, can be interpreted as defining another language over 
the terminal alphabet; the language is context-free but not necessarily regular, and 
is called 
{\em maximal language or max-language}. 
We anticipate from Section~\ref{sect:generalizationOpPrecLang} the reason of the 
name 
``max-language'': such languages   belong to the family of 
Higher-order Operator Precedence languages (Definition \ref{defHOP}), and they 
include 
any  other HOP language that is structurally compatible. 
We first define the reduction process, then we prove some properties of the 
language family.

Consider a set $\Phi_k \subseteq \Sigma^{\square k}$ and a word $w$ over $\Sigma$; let $x \in \Loc(\Phi)$ be the tagged word 
corresponding to $w$, if it exists. 
Word $w$ belongs to the max-language if $x$  reduces to a 
specified word by the repeated application of a reduction operation. 
A reduction cancels from the current  $x$  a sub-word  of a special form 
called a handle, replaces it  with a tag,  and thus produces a new tagged word. All the  
tagged words thus obtained by successive reductions must belong to $\Loc(\Phi)$.

\begin{definition}[maximal 
language]\label{defHandle}\label{defStrongReduction}\label{defSLRlanguage}
Let   $\Phi \subseteq \Sigma^{\square k}$.
A \emph{handle} is a word of the form  $ [ \,  x ] \, $ where $x \in 
\left(\Sigma - \{\#  \}\right) \cdot \left(\odot \left(\Sigma  - \{\#  
\}\right)\right)^* $, 
i.e.,   a handle is a tagged word enclosed between the tags [ and ], and not containing symbols in $\left\{[,],\# \right\}$.

\noindent
A \emph{reduction} is a binary relation  $\leadsto_{\Phi}  \subseteq  
(\Sigma \cup \Delta)^+ \times (\Sigma \cup \Delta)^+$ between 
tagged words, defined as:
\begin{equation}\label{eqDefStrongReduction1}
w [u] z \leadsto_{\Phi} w s z \text{ if, and only if, } w [u] z 
\in 
\Loc(\Phi) \text{ where } [u] 
\text{ is a handle, and }  \exists s \in 
\Delta:
 w s z \in \Loc(\Phi). 
\end{equation}
The  handle $[u]$ is called \emph{reducible}.
A reduction is called \emph{leftmost} if no handle occurs in $w$.
The definition of \emph{rightmost} reduction is similar.
Observe that at most one tag $s$ may satisfy 
Condition~\eqref{eqDefStrongReduction1} since  $\Phi$ is non-conflictual.
The subscript  $\Phi$ may be dropped from $ \leadsto_\Phi$ when clear from context;
$\stackrel* \leadsto$ is the reflexive and  transitive closure of $\leadsto$.
%

\noindent 
The \emph{tagged maximal language}  defined by $\Phi$ via {\em reduction} is 
$
\ltagged(\Phi) = 
\{ w \in \Sigma^\square \,\mid\; \csharp \,  [ \, \,w\,] \csharp 
\;\stackrel * \leadsto_{\Phi} \,\csharp\odot \csharp \}.
$
The \emph{maximal language} defined by  $\Phi$, is   
$\ltaggable(\Phi) = \sigma\left( \ltagged(\Phi)\right)$.
\par
We say that languages  $\ltaggable(\Phi)$ and $\ltagged(\Phi)$ are in the families 
$\fmaxtag(k)$ and $\ftagged(k)$ respectively;   a language is in the
$\fmaxtag$ family  if it is in $\fmaxtag(k)$ for some $k$.
\qed
\end{definition}
Notice that in  Definition~\ref{defSLRlanguage} the reductions may be applied in any 
order, 
without affecting $\ltagged(\Phi)$.

\begin{example}\label{ex:Dyck}
	This and the following examples were checked by a  program.
	The Dyck language (without $\varepsilon$) over the alphabet $\{a, a', b, 
	b'\}$, which can be easily extended to an arbitrary number of matching 
	pairs,
	 is a $\Red(3)$ language defined by the tagged word set
	$\Phi = \{ 
	\# \odot \# ,\ 
	b']a',\ 
	a']b',\ 
	\#[b,\ 
	b'[b,\ 
	b[b,$\ 
	$a']a',\ 
	\#[a,\ 
	b']\#,\ 
	a \odot a',\ $
	$a[b, \ 
	b[a, \ 
	b'[a,\ 
	a'[b,\ 
	b']b',\ 
	a']\#,$\ 
	$b \odot b',\ 
	a'[a,\ 
	a[a  
	\}.
	$
	Word  $a aa' a' aa'$ is recognized by the following  reductions, 
	respectively leftmost and rightmost: 
	\[
	\begin{array}{l|l} 
	\csharp [a[a \odot a']a'[a \odot a']\csharp   \leadsto
	\csharp [a \odot a'[a \odot a']\csharp   
	&
	\csharp [a[a \odot a']a'[a \odot a']\csharp   \leadsto
	\csharp [a[a \odot a']a']\csharp    \\ 
	\leadsto \csharp [a \odot a']\csharp   \leadsto \csharp  \odot \csharp    &
	\leadsto \csharp [a \odot a']\csharp   \leadsto \csharp  \odot \csharp   \\ 
	\end{array}
	\] 
\end{example}

Some elementary properties of max-languages come next.
\begin{lemma}\label{lemmaSLRproperties}
\begin{enumerate}
    \item $\nexists x,y \in \ltagged(\Phi)$, $x \ne y$: $\sigma(x) = 
    \sigma(y)$. (Unambiguity)
	\item $\forall x, y$, if $x \in \ltagged(\Phi)$ and 
  $\csharp  \,  [ \, \,x\,] \,  \csharp  \;\stackrel * \leadsto_{\Phi} \,\csharp  \,  [ \, \,  y \,] \,  \csharp  $, 
then 
  $y \in \ltagged(\Phi)$. (Closure under $\leadsto$)
	
	\item  Let $F_h= \sigma(\Phi)$ (hence $h= \lceil k/2 \rceil$). Then   
$
	\Loc(F_h) \supseteq \ltaggable(\Phi).$ (Refinement over SLT)
\end{enumerate}

\end{lemma}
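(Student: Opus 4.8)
\emph{Overview.} The three claims are essentially independent, so I would dispatch (1) and (3) quickly and spend the real effort on (2). For both (1) and (3) the starting observation is that membership in $\ltagged(\Phi)$ already forces membership in $\Loc(\Phi)$: since $\sigma(x)\neq\varepsilon$, the reduction $\csharp[x]\csharp\stackrel{*}{\leadsto}\csharp\odot\csharp$ is non-empty, and its first step can fire only if $\csharp[x]\csharp$ passes the local test, i.e. $\varphi_k(\csharp[x]\csharp)\subseteq\Phi$, which is exactly $x\in\Loc(\Phi)$. For (1), if $x,y\in\ltagged(\Phi)$ and $\sigma(x)=\sigma(y)$, then $x$ and $y$ both lie in $\sigma^{-1}(\sigma(x))\cap\Loc(\Phi)$, which the remark preceding the lemma says is at most a singleton; hence $x=y$.

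\emph{Claim (3).} Here I would use the projection identity $\sigma(\varphi_k(W))=f_h(\sigma(W))$, valid for any sufficiently long $W\in\Sigma^\square$: a length-$k$ tagged window carries exactly $h=\lceil k/2\rceil$ terminals, and as the window slides these terminal blocks sweep out precisely the $h$-factors of $\sigma(W)$. Taking $v=\sigma(x)\in\ltaggable(\Phi)$ with $x\in\Loc(\Phi)$ and noting $\sigma(\csharp[x]\csharp)=\#^{p}\,v\,\#^{p}$ with $p\geq h-1$, I apply $\sigma$ to the inclusion $\varphi_k(\csharp[x]\csharp)\subseteq\Phi$ to obtain $f_h(\#^{p}v\#^{p})\subseteq\sigma(\Phi)=F_h$, which is exactly $v\in\Loc(F_h)$.

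\emph{Claim (2).} This I would derive from confluence of $\leadsto_\Phi$. Each step strictly shortens the word (a handle has length $\geq 3$ and is replaced by a single tag), so the relation terminates; by Newman's lemma it then suffices to prove local confluence. Granting that, (2) follows, since $\csharp\odot\csharp$ contains no handle and is therefore irreducible: from $\csharp[x]\csharp\stackrel{*}{\leadsto}\csharp\odot\csharp$ and $\csharp[x]\csharp\stackrel{*}{\leadsto}\csharp[y]\csharp$, confluence forces $\csharp[y]\csharp\stackrel{*}{\leadsto}\csharp\odot\csharp$, i.e. $y\in\ltagged(\Phi)$. To prove local confluence I first note that two distinct handles occupy disjoint positions, since the interior of a handle contains no bracket and so one handle can neither nest inside nor overlap another. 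Writing $v=\alpha[u_1]\beta[u_2]\gamma$ with the two handles reducible to tags $s_1,s_2$, both one-step reducts should join at $v_3=\alpha s_1\beta s_2\gamma$; non-conflictuality makes each replacement tag unique, so the whole matter reduces to checking $\varphi_k(v_3)\subseteq\Phi$. Every tagged window of $v_3$ containing at most one of $s_1,s_2$ already occurs in $\alpha s_1\beta$ or in $\beta s_2\gamma$, hence in $v_1$ or $v_2$, both of which lie in $\Loc(\Phi)$ by reducibility.

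\textbf{The hard part} is a window of $v_3$ that straddles both $s_1$ and $s_2$ at once. This happens exactly when the two handles lie within $k$ symbols of each other --- impossible for $k=3$, where a tagged $3$-word holds a single tag, but genuine for $k\geq 5$ --- and such a window coincides with no window of $v_1$ or $v_2$, so its membership in $\Phi$ is not automatic. I expect to settle it by a splitting argument: the portion of the window up to the terminal separating $s_1$ and $s_2$ agrees with the corresponding portion of a $\Phi$-window of $v_1$, its remaining portion agrees with that of a $\Phi$-window of $v_2$, and I would amalgamate these two partial local tests into the single window, using simultaneously that both handles are reducible (so $v_1,v_2\in\Loc(\Phi)$) and that non-conflictuality pins down the compatible tags. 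This amalgamation is the step I expect to be the genuine obstacle; everything else is bookkeeping.
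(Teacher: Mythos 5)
Your treatments of statements (1) and (3) are correct and essentially the paper's own. For (1), you reduce to the fact that $\sigma^{-1}(w)\cap\Loc(\Phi)$ is at most a singleton for non-conflictual $\Phi$, after noting that any $x\in\ltagged(\Phi)$ must pass the local test (the first reduction step requires it, and at least one step is needed since $\csharp[x]\csharp\neq\csharp\odot\csharp$). For (3), your identity $\sigma(\varphi_k(W))=f_h(\sigma(W))$ is the same computation the paper packages as $\hat\Phi=\sigma^{-1}(F_h)\cap\Sigma^{\square k}$ and $\Loc(F_h)=\sigma(\Loc(\hat\Phi))$.

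The gap is in statement (2), and it sits exactly where you flagged it. Reducing (2) to confluence via Newman's lemma is a sound strategy (termination is clear, $\csharp\odot\csharp$ is irreducible, and distinct handles are indeed disjoint), but the local-confluence proof is not completed. When the two handles of $v=\alpha[u_1]\beta[u_2]\gamma$ lie within $k$ symbols of each other, a tagged $k$-window of $v_3=\alpha s_1\beta s_2\gamma$ containing both new tags occurs in neither $v_1=\alpha s_1\beta[u_2]\gamma$ nor $v_2=\alpha[u_1]\beta s_2\gamma$, and the ``amalgamation'' you invoke does not follow from what you have established: non-conflictuality yields at most one word of $\Phi$ with projection $\sigma(t)\sigma(\beta)\sigma(t')$ --- uniqueness, not existence --- while the hypotheses in play ($v,v_1,v_2\in\Loc(\Phi)$) certify only windows whose left part is checked in $v_1$ or whose right part is checked in $v_2$, never a window straddling both substituted tags. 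So the assertion that $[u_2]$ remains reducible in $v_1$ with the same tag $s_2$ is precisely the unproved content of statement (2), and your sketch stops short of it. For comparison, the paper gives you no help here either: it disposes of (1) and (2) with ``follow from Definition~\ref{defSLRlanguage}'', leaning on the equally unproven remark that reductions may be applied in any order without affecting $\ltagged(\Phi)$. Your write-up at least locates the genuine difficulty, but as it stands it is not a proof of (2); you would need either to carry out the window-amalgamation argument in full or to find a different route (e.g., showing directly that the leftmost-reduction sequence is canonical and that every reduction sequence can be permuted into it).
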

\begin{proof}
Stat. 1. and 2. follow from Definition \ref{defSLRlanguage}. 
Define the tagged $k$-words set  $\hat{\Phi}= \sigma^{-1}(F_h)\cap 
\Sigma^{\square k}$, which clearly includes $\Phi$.  From the identity 
$\Loc(F_h)= \sigma(\Loc(\hat{\Phi}))$, Stat.~3 follows. 
\end{proof}

\begin{example}\label{ex:FirstExample} This is a running example. 
	$L = \{a^n (c b^+)^n \mid n > 0\}$
is a $\maxtag(3)$ language specified by
$
\Phi= \{ 
\# \odot \#,\ 
\# [ a,\ 
b ] \#,\ $
$b ] c,\ 
c \odot b,\ 
b \odot b,\ 
a \odot c,\ 
a [ a
\}
$.
The reduction steps for  word  $aaacbbbbcbcbbb$ are:
$ \begin{array}{l}
    \csharp [a[a [a \odot c \odot b \odot b \odot b \odot b] c \odot b]c \odot b \odot b \odot b]\csharp   \leadsto \\
\csharp [a[a \odot c \odot b]c \odot b \odot b \odot b]\csharp    \leadsto 
\csharp [a \odot c \odot b \odot b \odot b]\csharp   \leadsto 
\csharp  \odot \csharp   
\end{array} 
$\\
Therefore $ [a[a[a \odot c \odot b \odot b \odot b \odot b]c \odot b]c \odot b \odot b \odot b]  \in \ltagged(\Phi)$ and 
$aaacbbbbcbcbbb  \in  \ltaggable(\Phi)$.
On the other hand, word $aa cbb$ is not accepted because it is not the content of a tagged word that reduces to $ \csharp  \, \odot \, \csharp  $, 
in particular, the reduction of handle $ [a \odot c \odot b \odot b]$
in $\csharp [a [a \odot c \odot b \odot b] \csharp$
is not possible because there is not a tag $s$ such that $ a s \#  \in  
\Loc(\Phi)$.

\end{example}

%


First, we compare $\fmaxtag$ with REG, the family of {\em regular languages}.

\begin{theorem}\label{theorSLTinclSLRincompREG}
 The family of $\fmaxtag$ languages strictly includes the SLT family and is incomparable 
 with the REG family. 
\end{theorem}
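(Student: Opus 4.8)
The statement splits into three claims: (i) every SLT language is an $\fmaxtag$ language; (ii) $\fmaxtag$ contains a language outside REG (whence the inclusion in (i) is strict, since $\text{SLT}\subseteq\text{REG}$); and (iii) REG contains a language outside $\fmaxtag$. I would prove (i) and (ii) by direct construction and then spend the real effort on (iii).

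For (i), the plan is to realize an SLT language by a ``flat'' tagging that uses only the tag $\odot$ inside words. Let $L=\Loc(F_h)$ with $F_h\subseteq\Sigma^h$, and set $k=2h-1$. Map each word $a_1a_2\cdots a_n$ to its canonical tagging $a_1\odot a_2\odot\cdots\odot a_n$, and take $\Phi\subseteq\Sigma^{\square k}$ to be exactly those tagged $k$-words that use $\odot$ as their only internal tag (with $[,]$ appearing only against the end-words $\csharp$) and whose projection $\sigma$ lies in $F_h$; this is consistent with $F_h=\sigma(\Phi)$ and $h=\lceil k/2\rceil$. One then checks that $w\in L$ iff $\csharp[\,a_1\odot\cdots\odot a_n\,]\csharp\in\Loc(\Phi)$, in which case the whole content is a single $\odot$-handle that reduces in one step to $\csharp\odot\csharp$; conversely, because $\Phi$ admits no internal $[$ or $]$, no nested handle is ever available, so the flat reduction is the only one possible and forces $\sigma^{-1}(w)\cap\Loc(\Phi)$ to be the canonical tagging. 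Hence $\ltaggable(\Phi)=L$.

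For (ii), I would simply invoke the running Example~\ref{ex:FirstExample}: $L=\{a^n(cb^+)^n\mid n>0\}$ is an $\fmaxtag(3)$ language and is plainly non-regular (a pumping argument on the matched count of $a$'s and $c$-blocks). This gives $\fmaxtag\not\subseteq\text{REG}$, and since $\text{SLT}\subseteq\text{REG}$ the same language witnesses that the inclusion in (i) is strict.

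The main obstacle is (iii): exhibiting a regular language that is not an $\fmaxtag$ language. My candidate is the unary language $L=(aa)^+$ of even powers of $a$. Suppose $L=\ltaggable(\Phi)$ for some non-conflictual $\Phi\subseteq\Sigma^{\square k}$. By Lemma~\ref{lemmaSLRproperties} each $a^{2n}\in L$ has a unique tagging $x_{2n}\in\ltagged(\Phi)$ with $x_{2n}\in a(\Delta a)^*$, and the corollary to Lemma~\ref{lemmaSLTlistLang} forces $x_{2n}$, for all large $n$, to use a single tag throughout, since two distinct tags would make $\varphi_k(x_{2n})\subseteq\Phi$ conflictual. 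I then examine the three cases. If the tag is $\odot$, the whole word is one handle reducing to $\csharp\odot\csharp$; if it is $[$ (resp.\ $]$) the reduction peels one $a$ at a time from the right (resp.\ left). In every case the key observation is that, for large $n$, the finite set of tagged $k$-words occurring throughout the entire reduction of $x_{2n}$ depends only on the single repeated tag and the two boundary patterns, not on the parity of $2n$; hence the analogous single-tag tagging of $a^{2n+1}$ uses exactly the same tagged $k$-words, lies in $\Loc(\Phi)$, and admits the same reduction to $\csharp\odot\csharp$. This would put $a^{2n+1}\in\ltaggable(\Phi)=L$, contradicting evenness. The delicate point to get right is that each intermediate tagged word in the reduction of $a^{2n+1}$ is again in $\Loc(\Phi)$; this follows because the collection of $k$-word factors arising along the two reductions coincides once $n$ is large, so membership in $\Phi$ transfers from the successful reduction of $x_{2n}$ to that of $a^{2n+1}$. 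Concluding, $(aa)^+\in\text{REG}\setminus\fmaxtag$, completing the incomparability.
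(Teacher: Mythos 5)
Your proposal is correct and follows essentially the same route as the paper's proof: the flat $\odot$-tagging construction (with $[,]$ only at the end-word boundaries) for $\text{SLT}\subseteq\fmaxtag$, a non-regular $\fmaxtag(3)$ witness for $\fmaxtag\not\subseteq\text{REG}$, and the single-tag analysis of $(aa)^+$ via the corollary of Lemma~\ref{lemmaSLTlistLang} for $\text{REG}\not\subseteq\fmaxtag$. The only cosmetic differences are in the choice of witnesses: the paper uses the Dyck language for non-regularity and the regular but non-locally-testable language $a^*ba^*\cup a^+$ for strictness of the SLT inclusion, whereas you reuse the running example $\{a^n(cb^+)^n\mid n>0\}$ for both.
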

\begin{proof}
Inclusion SLT $\subseteq \fmaxtag$: 
The mapping $\widetilde{(\cdot)} : \Sigma^+ \to \Sigma^{\square}$ is defined by 
$\widetilde{z}= z(1) \odot z(2) \odot \dots \odot z(|z|) $, for any $z \in \Sigma^+$. 
Given a set $F_j$, $j\ge 2$ defining  an SLT language over $\Sigma$, we define the set $\Phi_k$, $k= 2j-1$: it contains, for every $u \in F_j \cap \left(\Sigma - \{\# \}\right)^+$, the tagged word
$\widetilde{u}$, and, for every 
$w= \#  ^{j_1} v  \#  ^{j_2} \in F_j$, $j_1 + |v| + j_2 = j$, 
the tagged word  
$\widetilde{w}=(\#\odot  )^{j_1 -1} \# [ \,  v(1) \odot v(2) \odot \dots \odot v(|v|) ] \,\#  (  \odot\#)^{j_2-1}$. 

\noindent We prove  that  $\Loc(F_j) \subseteq \ltaggable(\Phi_k)$. 
Consider any  $z\in \Loc(F_j)$, for simplicity assume $|z| \geq j$. 
Then   $\widetilde{\#^{j-1}   z \#^{j-1} } = \csharp   [z(1) \odot \dots \odot z(|z|) ]  \csharp$  
$\leadsto_{\Phi_k} \,  \csharp   \odot \csharp   $.   Since the converse 
inclusion $\Loc(F_j)\supseteq \ltaggable(\Phi_k)$ is   obvious, it follows that 
SLT $\subseteq \fmaxtag$.  

\par
The inclusion SLT $\subset \fmaxtag$  is proved by using
$L=a^* b a^* \cup a^+$ which is defined by 
$\Phi_3 = 
\{
\#[b,
a]b,
\# \odot \#,
\# [ a,
b]\#,
a \odot a,
b[a,
a]\#
\}
$. 
But it is known that $L$ is not locally testable.

\par
The inclusion $\fmaxtag \not\subseteq$ REG is proved by the Dyck languages.
To prove REG $\not\subseteq \fmaxtag$, we consider $R=(aa)^+$. 
By Lemma~\ref{lemmaSLTlistLang}, a $\Phi_k$ for $R$ may only use one tag, and we first consider the case [ (the case with ] is analogous). For any odd value $k \geq 3$,  $\Phi_k$ has the form
$\{
\csharp [ a, \ldots
\#  [ a [ \ldots  [  a, 
a [ \ldots  [  a, 
\ldots
a [ \ldots  [ a ] \csharp,
\ldots
\}$, 
therefore the  handle is always $[a]$ and $\ltaggable(\Phi_k)$ necessarily includes also words with an odd number of $a$'s. 
The same conclusion holds in the $\odot$ case, since  any handle has the form $[ a\odot \ldots \odot a ]$. 
\end{proof}

We prove that family $\fmaxtag$ is an infinite strict hierarchy.
\begin{theorem}\label{theorSLRk<SLRk+1}
For every $k\geq 5$, the language family  $\fmaxtag(k)$ strictly includes $\fmaxtag(k-2)$.
\end{theorem}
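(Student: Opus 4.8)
The statement has two parts: the inclusion $\fmaxtag(k-2)\subseteq\fmaxtag(k)$ and its strictness. The plan is to dispatch the inclusion by a standard raising-of-order construction for strictly locally testable sets, and to invest the real effort in a separating language together with a non-membership argument built on Lemma~\ref{lemmaSLTlistLang}.

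For the inclusion, given a non-conflictual $\Phi_{k-2}\subseteq\Sigma^{\square(k-2)}$ I would set
\[
\Phi_k=\{\,y\in\Sigma^{\square k}\mid \varphi_{k-2}(y)\subseteq\Phi_{k-2}\,\}.
\]
The key observation is that $\leadsto_{\Phi}$ in Definition~\ref{defSLRlanguage} uses $\Phi$ only through membership in $\Loc(\Phi)$ and through the handle condition, and the latter is width-independent. Hence it suffices to check $\Loc(\Phi_k)=\Loc(\Phi_{k-2})$: a tagged word passes the width-$k$ test iff every tagged $k$-subword of $\csharp[\,\cdot\,]\csharp$ lies in $\Phi_k$, iff every tagged $(k-2)$-subword lies in $\Phi_{k-2}$, the extra boundary windows produced by the longer end-word consisting only of $\#$ and $\odot$ and being forced to be present whenever the width-$(k-2)$ language is well formed. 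One also verifies that $\Phi_k$ stays non-conflictual: two $k$-words with equal $\sigma$-image differ in a tag, and that discrepancy already shows up in a pair of conflicting $(k-2)$-subwords, which cannot both belong to the non-conflictual $\Phi_{k-2}$. With $\Loc(\Phi_k)=\Loc(\Phi_{k-2})$ the relations $\leadsto_{\Phi_k}$ and $\leadsto_{\Phi_{k-2}}$ coincide, so $\ltagged(\Phi_k)=\ltagged(\Phi_{k-2})$ and the two max-languages are equal.

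For strictness I would calibrate a long-distance dependency to the window size. Write $m=(k-3)/2$, so a tagged block $w\in\Sigma^{\square(k-4)}$ carries exactly $m$ terminals, and recall that a width-$h$ window reads $(h+1)/2$ terminals: a width-$k$ window straddling a tag that immediately follows such a block still sees \emph{two} terminals to the left of the block, whereas a width-$(k-2)$ window sees only \emph{one}. I would therefore use an alphabet with a filler $a$ and at least four markers, and let $L_k$ be a language in which the tag that must be placed right after each occurrence of the filler block $a^m$ depends on a marker sitting two positions before that block, a correlation a width-$k$ window can read off directly. Membership $L_k\in\fmaxtag(k)$ is then obtained by writing down $\Phi_k$ explicitly; in fact $L_k$ can be arranged to be SLT of the order embedded into $\fmaxtag(k)$ by Theorem~\ref{theorSLTinclSLRincompREG}.

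The hard part will be $L_k\notin\fmaxtag(k-2)$, since there I must defeat not only the width-$(k-2)$ local test but also the extra power of the reductions. Suppose $L_k=\ltaggable(\Phi_{k-2})$ for some non-conflictual $\Phi_{k-2}$, and pick $z_1\in L_k$ and $z_2\notin L_k$ differing only in the decisive marker two positions before some block $a^m$. The tag that $\Phi_{k-2}$ assigns after that block is a function of its width-$(k-2)$-visible context, namely $a^m$ and the single preceding terminal, which is identical in $z_1$ and $z_2$; were $\Phi_{k-2}$ to assign two distinct tags $s',s''$ there, the two occurrences would furnish, inside one word of the form $w\,s'\,w\,s''\,w$, a pair of conflicting $(k-2)$-subwords, contradicting non-conflictuality by Lemma~\ref{lemmaSLTlistLang} applied with the block of tagged length $k-4$ and $h=k-2\le(k-4)+2$. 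Because only three tags are available, the reductions cannot relay the missing marker across the filler block either. Hence $z_1$ and $z_2$ receive the same tagging and undergo the same reductions, so $z_1\in\ltaggable(\Phi_{k-2})\iff z_2\in\ltaggable(\Phi_{k-2})$, contradicting $z_1\in L_k$ and $z_2\notin L_k$. The delicate points I expect to fight are the precise design of $L_k$ so that this single decisive marker genuinely cannot be relayed by any chain of reductions, and the bookkeeping that places the conflicting windows inside one legal word; that is where the argument will live or die.
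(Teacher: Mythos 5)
Your treatment of the inclusion $\fmaxtag(k-2)\subseteq\fmaxtag(k)$ is sound: taking $\Phi_k=\{y\in\Sigma^{\square k}\mid \varphi_{k-2}(y)\subseteq\Phi_{k-2}\}$ does yield $\Loc(\Phi_k)=\Loc(\Phi_{k-2})$ (modulo the routine check on the longer end-words), hence the two reduction relations and the two max-languages coincide, and non-conflictuality is inherited for the reason you give. The paper does not even write this part down; its proof consists entirely of the separation.

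The separation half of your proposal, however, is a research plan rather than a proof, and the gap is exactly where you locate it yourself. You never define $L_k$, never exhibit the $\Phi_k$ certifying $L_k\in\fmaxtag(k)$, and the pivotal assertion that ``the reductions cannot relay the missing marker across the filler block'' is the entire content of the non-membership claim and is left unargued. Your window argument only forces the taggings of $z_1$ and $z_2$ to agree at positions whose whole width-$(k-2)$ context is identical in the two words; near the differing marker the taggings may diverge, reductions act globally, and nothing written excludes that $z_2$ is accepted for reasons unrelated to the one tag you control. (Also, conflictuality of $\Phi_{k-2}$ follows directly from two windows with equal $\sigma$-image being forced into $\Phi_{k-2}$ by the two legal words; you do not need, and have not arranged, a single word of the shape $w\,s'\,w\,s''\,w$ as in Lemma~\ref{lemmaSLTlistLang}.) The paper sidesteps all of this with the witness $L_{(h)}=(a^h b)^+$ for $k=2h+1$: in any tagged version, each run $a^h$ carries a tag sequence $s_1,\dots,s_{h-1}$ occupying exactly one width-$(k-2)$ window, and either two of these tags differ --- in which case the paper invokes Lemma~\ref{lemmaSLTlistLang} to produce a conflict in $\Phi_{k-2}$ --- or all tags in the run are equal, in which case inserting one more identically tagged $a$ shows that the same $\Phi_{k-2}$ also accepts $a^{h+1}b\notin L_{(h)}$. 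That second horn is precisely the substitute for your unproved no-relay claim: constancy of the local windows makes the extra symbol invisible to both the local test and every reduction step. To salvage your marker-based language you would need to supply an argument of comparable force, and at present it is missing.
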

\begin{proof}
Consider the languages $L_{(h)} = (a^h b)^+$, $a \ge 1$. 
It is easy to verify  that for $k = 2h+1$, 
$L_{(h)}$ is in $\fmaxtag(k)$. 
E.g., $L_{(2)} = \ltaggable(\{ 
\# \odot \#[a,\ 
b]\# \odot \#,\ 
b]a \odot a,\ 
a \odot b]a,\ 
a \odot a \odot b,\ 
\#[a \odot a,\ 
a \odot b]\#,\ 
\# \odot \# \odot \#
\})$.

We prove that $L_{(h)}$ is not in  $\fmaxtag(k-2)$.
Assume by contradiction that  $L_{(h)}=\ltaggable(\Phi_{k-2})$, for some $\Phi_{k-2}$, and consider $y \in L_{(h)}$ and $y'\in \ltagged(\Phi_{k-2})$, such that $y=\sigma(y')$. 
The word $y'$ contains a
  tagged sub-word $w= a s_1 \ldots a s_{h-1} a$, $s_i \in \Delta$, and two cases are possible.
\\ \noindent $\bullet$ $\exists 1 \leq i < j \leq h-1$ such that $s_i \neq s_j$. By Lemma \ref{lemmaSLTlistLang}, the set $\varphi_{k-2}(w)$ is conflictual.
\\ \noindent $\bullet$ All $s_i$ in $w$ are identical.
  	But this means that, if $a^h b \in \ltaggable(\Phi_{k-2})$ (by hypothesis), then also $a^{h+1} b \in \ltaggable(\Phi_{k-2})$, which is not in $L_{(h)}$.
\end{proof}

\par
The $\fmaxtag$ family is not closed under the following operations, as proved by
witnesses:\\
\noindent {\bf Intersection}: 
$\{a^n b^n c^* \mid n > 0\} = \ltaggable(\{
b[c,\ 
c]\#,\ 
a \odot b,\ 
c \odot c,\ 
\# \odot \#,\ 
\#[a,\ 
b]\#,\ 
b]b,\ 
a[a\ 
\})$ and
$\{a^* b^n c^n \mid n > 0\} =
\ltaggable(\{
\#[b,\ 
a]b,\ 
c]\#,\ 
c]c,\ 
\# \odot \#,\ 
b[b,\ 
\#[a,\ 
a \odot a,\ 
b \odot c
\})$, and their intersection is not  context-free.
\\
\noindent {\bf Set difference}:
$(a^* b a^* \cup a^+) - a^+$.
The first language is in Red(3) (see the proof of Theorem~\ref{theorSLTinclSLRincompREG}) and requires $\csharp [ a$ and $a ] \csharp$ because
words may begin and end with $a$. Since 
unlimited runs of $a$ are possible, Lemma~\ref{lemmaSLTlistLang}
imposes the same tag between any pair of  $a$'s, hence
the resulting $\maxtag$ language necessarily contains  $a^+$,  a contradiction.\\
\noindent {\bf Concatenation}:
$a^* b = \ltaggable(\{
\#[b,\ 
a\odot b,\ 
\#\odot\#,\ 
\#[a,\ 
b]\#,\ 
a\odot a
\})$
concatenated with 
$a^+$ is similar to the witness for set difference.\\
\noindent {\bf Intersection with regular set}:
$\{a, b\}^+ \cap (a a)^+$, see Theorem~\ref{theorSLTinclSLRincompREG}.

The language family of the next section contains $\fmaxtag$ and has better 
closure properties.

\section{Generalization of operator-precedence languages}\label{sect:generalizationOpPrecLang}

We introduce a new family of languages, called $\taggable(k)$, standing for 
{\em Higher-order Operator Precedence languages} of order $k\geq 3$. The 
$\taggable(k)$ condition is decidable for grammars; $\taggable(k)$ languages are deterministic, also in reverse. With respect to existing families, we start from the operator-precedence (OP) 
languages, defined by Floyd \cite{Floyd1963}, and prove that they are the same 
as the new family 
$\ftaggable(3)$, when the grammar rules are in non-extended CF form.  We prove 
the Boolean closure property  for  the family of $\taggable(k)$ languages 
having the same set $\Phi_k$ of  tagged $k$-words. The top element in such 
family is the $\maxtag$ language (also known as max-language) defined by $\Phi_k$. 

\medskip
\noindent {\bf Operator Precedence Grammars }
An Operator Precedence grammar\footnote{Floyd's definition uses CF 
grammars, but it is straightforward to extend it to ECF grammars.}
\cite{Floyd1963}
is characterized by three OP relations over $\Sigma^2$, denoted by $\gtrdot$, $\doteq$, $\lessdot$, that are used to assign a structure to the words (see e.g. \cite{GruneJacobs:08} for the classical parsing algorithm).

\begin{example}\label{ex:opgram}
Consider the following operator grammar (with rules specified for brevity by regular expressions) and its OP relations:
\[ 
\begin{array}{lcl}
G_1 =  
\left \{
S \to X b X \cup b X, \ 
X \to a a^*
\right\}
&
\ \ \ 
&
a \doteq a,\, a \gtrdot b,\, b \lessdot a.
\end{array}
\]
By default, $\# \lessdot x$ and $x \gtrdot \#$, for any $x \in \Sigma$, and $\# 
\doteq \#$.
A grammar is OP if at most one relation exists between any two terminal symbols.
To parse word $aaaba$, the bottom-up OP parser executes the following 
reduction steps:
\begin{equation}
\# \lessdot a \doteq a \doteq a \gtrdot b \lessdot a \gtrdot \#  \Longleftarrow_{G_1}
\# \stackrel{X}{\lessdot} b \lessdot a \gtrdot \# \Longleftarrow_{G_1}
\# \stackrel{X}{\lessdot} b \stackrel{X}{\gtrdot} \# \Longleftarrow_{G_1}
\# \stackrel{S}{\doteq} \# 
\label{eqOPreduction}
\end{equation}
If we substitute the OP relation symbols with tags, the above 
OP relations are encoded by the tagged 3-words $\Phi_3 = \{ a \odot a, a [ b, b ] a, \# [ a, a ] \#, \# \odot \#, \# [ b, b ] \# \}$. 
The OP property implies that $\Phi_3$ is non-conflictual.  
Notice that each word in \eqref{eqOPreduction} (disregarding the \#'s) belongs to $\ltagged(\Phi_3)$.
Recalling Definition~\ref{defSLRlanguage}, we observe that $a\odot a \odot a ] b [ a \in \ltagged(\Phi_3)$, therefore $aaaba \in \ltaggable(\Phi_3)$. 
Moreover, $\ltaggable(\Phi_3) = a^* b a^* \cup a^+ \supset L(G_1)$. 
\end{example}

Our generalization of OP grammars is based on the idea of using tagged 
$k$-words, with $k \ge 3$, for assigning a syntactic structure to words.
The test for a grammar to be OP \cite{Floyd1963} is  quite simple, but  the wider contexts  needed when $k>3$, impose 
a more involved  device for the no-conflict check. For that we define a grammar, called {\em tagged}, obtained from
the original grammar by inserting tags into rules.

\begin{definition}[Tagged grammar]\label{def:taggedGrammar}
Let  $G=(V_N,\Sigma, P, S)$ be a grammar.
Define a language substitution $\rho: V \to \mathcal{P}\left( 
V \cup \Delta  \cup \left(\Sigma \cup \Delta \right)^2 \right)$ 
such that
\begin{equation*}
\rho(a) =  \left\{ a,\  a\odot,\  a],\  [a \right\} ,   a \in \Sigma; \qquad
\rho(X) = \left\{X \right\} \cup \Delta, \, X \in V_N.
\end{equation*}
Let $R$ be the regular language defined by 
$
R = \left(V_N \cup \{ [ \}\right) \cdot \Sigma \cdot \left( \left( V_N \cup \{ \odot \} \right) \cdot \Sigma \right)^* \cdot  \left( V_N \cup \{ ] \}\right)
$. 
We construct from $G$
 the \emph{tagged grammar} associated to $G$, denoted by
$\overline G =( V_N,\Sigma \cup \Delta, \overline P, S)$. 
For each rule $X \to R_X \in P$, $\overline G$ has the rule
$X \to \overline R_X$  where
$\overline R_X = \rho(R_X) \cap R$. \qed
\end{definition}
The idea underlying $\overline G$'s definition is to insert tags into the rules of $\overline P$, 
so 
that $G$'s structure becomes visible  in the  tagged words generated by $\overline G$.
Tagged grammars 
are  akin to the classical parenthesis grammars \cite{McNaughton67}, yet their representation of 
nested  structures is more parsimonious, since a single ``['' tag (analogously a ``]'')
can represent many open (resp. closed) parentheses.
Notice that $\sigma(L(\overline G)) \supseteq  L(G)$,
since tagged grammar rules exist, which replace a nonterminal with a tag. Such rules may generate words that, after deleting the tags, are not in $L(G)$.
To illustrate, going back to $G_1$ of Example~\ref{ex:opgram}, grammar $\overline G_1$ has the rules
$\{
S \to  \left(X \cup [ \right) b \left(X \cup \, ] \right)
$,
$X \to [a \left(\odot a\right)^* ]
\}
$
and generates the word $[b]$, while $\sigma([b])\notin L(G_1)$.
With the help of the  tagged grammar $\overline G$, we can compute all the tagged $k$-words that may occur in parsing any valid word for grammar $G$ (exemplified in \eqref{eqOPreduction}). Then,  we can check whether they are conflictual or not. In the latter case, grammar $G$ fulfills the next Definition \ref{defHOP}  of HOP($k$) grammar.  
\par
Returning to  Example~\ref{ex:opgram}, the tagged 3-words  $\varphi_3\left(\# L(\overline G_1) \# \right)$ coincide with the set $\Phi_3$  encoding the precedence relations. 
As observed, since $G_1$ is an OP grammar, $\Phi_3$ is 
nonconflictual, and $G_1$ is a HOP($3$) grammar as well. The formalization follows.

\begin{definition}[Higher-order Operator Precedence grammars]\label{defHOP} 
Let $k \ge 3$ be an odd integer. A  grammar  $G$, having  $\overline G$ as associated 
tagged grammar, is a {\em higher-order operator precedence grammar}  
of order $k$ (in short HOP($k$)) if 
\begin{equation}
\nexists u, v \in \varphi_k\left(\csharp  L(\overline G)  \csharp \right) 
\text{ such that } u \neq v \text{ and } \sigma(u) = \sigma(v)
\label{eqConflict}
\end{equation}
This means that the set of 
all tagged $k$-words  occurring in any sentence of $L(\overline G)$ is nonconflictual. 
\noindent
The union of the families $\ftaggable(k)$ for all values of $k$ is denoted by $\ftaggable$. 
The family of  grammars  $\taggable(k)$ having the same set $\Phi$  of tagged $k$-words is 
denoted by $\taggable(k, \Phi)$.  Identical notations denote the corresponding language families, when no confusion arises.
\qed
\end{definition} 
The decidability of Condition~(\ref{eqConflict}) for a given grammar and a 
fixed value of $k$ is obvious.
With an abuse of terminology, we also say that $\varphi_k(L(\overline G) )$ 
are the tagged $k$-words of grammar $G$. 

\begin{theorem}\label{teo:op}
The family  OP of \emph{operator precedence} languages coincides with the family
HOP(3), and is properly included within the HOP family.
\end{theorem}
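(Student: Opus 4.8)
First, I would establish the coincidence OP = HOP(3). The key observation is already made in the excerpt: when $k=3$, a tagged 3-word is exactly a word of the form $a\,s\,b$ with $a,b\in\Sigma\cup\{\#\}$ and $s\in\Delta=\{[,],\odot\}$, which encodes precisely one OP relation between the consecutive terminals $a$ and $b$ (with $[$ playing the role of $\lessdot$, $]$ the role of $\gtrdot$, and $\odot$ the role of $\doteq$). I would make this correspondence explicit as a bijection between OP relation assignments and non-conflictual sets $\Phi_3$. For the forward inclusion OP $\subseteq$ HOP(3): given an OP grammar $G$, its OP relations are pairwise exclusive (at most one relation between any two terminals), so the set $\varphi_3(\csharp L(\overline G)\csharp)$ of tagged 3-words is non-conflictual, which is exactly Condition~(\ref{eqConflict}) for $k=3$; hence $G$ is HOP(3). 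For the converse HOP(3) $\subseteq$ OP: a HOP(3) grammar has a non-conflictual $\Phi_3$, and reading off the unique tag between each pair of terminals recovers a consistent assignment of OP relations $\lessdot,\doteq,\gtrdot$, so $G$ is an OP grammar. I would need to verify that the tagged grammar $\overline G$ computes exactly the same adjacency information that Floyd's original OP-relation test computes—that the tags inserted by the substitution $\rho$ on the left/right edges of handles coincide with $\lessdot$ on left edges, $\gtrdot$ on right edges, and $\odot$ inside handles. The footnote's remark that Floyd's grammars are CF while ours are ECF must be handled: one restricts to non-extended CF rules so the two models have the same underlying grammars.

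Second, for the proper inclusion HOP(3) $\subsetneq$ HOP, I would exhibit a language that is HOP($k$) for some $k>3$ but provably not HOP(3). The natural witness is one of the languages already used in the paper to separate the $\maxtag(k)$ hierarchy in Theorem~\ref{theorSLRk<SLRk+1}, for instance $L_{(2)}=(a^2b)^+$, which lies in $\maxtag(5)\subseteq$ HOP(5) but not in $\maxtag(3)$. I would argue that $L_{(2)}$ is not HOP(3) by the same conflict argument as in that theorem: any $\Phi_3$ attempting to define it runs afoul of Lemma~\ref{lemmaSLTlistLang} (the corollary on runs $a(\Delta a)^*$ forces a single tag between consecutive $a$'s, and a single repeated tag cannot distinguish $a^2b$ from $a^3b$). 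Since HOP languages include the corresponding max-languages as top elements, membership of $L_{(2)}$ in HOP(5) follows from its membership in $\maxtag(5)$.

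The main obstacle I anticipate is the first direction—making the OP $=$ HOP(3) identification fully rigorous rather than merely intuitive. Specifically, I must verify that the tagged-grammar construction of Definition~\ref{def:taggedGrammar} produces, at $k=3$, exactly the classical OP relation set, including the correct treatment of the boundary conventions ($\#\lessdot x$, $x\gtrdot\#$, $\#\doteq\#$) and the placement of tags across intervening nonterminals (recall OP relations are defined between terminals \emph{possibly separated by a nonterminal}). The substitution $\rho$ sends each nonterminal $X$ to $\{X\}\cup\Delta$, so that in the tagged language a nonterminal boundary is replaced by a single tag; I would need to confirm that this yields precisely the relation Floyd assigns between the two terminals flanking that nonterminal, so that non-conflictuality of $\Phi_3$ is equivalent to the OP no-conflict condition. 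Once this correspondence is pinned down, the remaining steps are routine bookkeeping.
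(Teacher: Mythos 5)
Your first part (OP $=$ HOP(3)) follows essentially the same route as the paper: both set up the explicit correspondence $a\doteq b \iff a\odot b\in\varphi_3(L(\overline G))$, $a\lessdot b \iff a[b$, $a\gtrdot b \iff a]b$, verify it against the tagged-grammar construction (including the treatment of an intervening nonterminal via $\rho(Y)\supseteq\Delta$ and the boundary conventions), and conclude that the OP no-conflict condition is exactly Condition~(\ref{eqConflict}) for $k=3$. That part is fine.

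The second part has a genuine gap: your witness $L_{(2)}=(aab)^+$ does not separate HOP(3) from HOP, because $(aab)^+$ \emph{is} an OP language. Take the grammar $S\to Saab \cup aab$: its precedence relations are $a\doteq a$, $a\doteq b$, $b\gtrdot a$ (plus the boundary relations), and no pair of terminals carries two relations, so the grammar is OP and hence HOP(3). The error is a conflation of two different hierarchies. Theorem~\ref{theorSLRk<SLRk+1} shows $L_{(2)}\notin\maxtag(3)$, i.e.\ it is not a \emph{max-language} of order~3; but HOP$(3,\Phi)$ contains every language generated by a HOP(3) grammar whose tagged $3$-words lie in $\Phi$, and these are in general proper sublanguages of the max-language $\maxtag(\Phi)$ (Theorem~\ref{th:InclusionMaxlang} gives only an inclusion, not equality). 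So the argument ``a single repeated tag cannot distinguish $a^2b$ from $a^3b$'' rules out $L_{(2)}$ as a max-language but says nothing about whether some grammar carves $L_{(2)}$ out of a larger max-language --- which is exactly what $S\to Saab\cup aab$ does. The paper's witness is instead the non-regular language $L=\{a^n(baab)^n\mid n\ge 1\}$, for which a pumping argument forces \emph{both} $a\lessdot a$ and $b\gtrdot b$ into $\Phi_3$ for any generating grammar, and an exhaustive check over the remaining choices of $\Phi_3$ shows that every non-conflictual completion fails to generate $L$; membership in HOP is then certified by exhibiting $S\to aSbaab\cup abaab$ as a HOP(7) grammar. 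You would need a witness of this kind --- one where conflicts are forced at the level of the grammar's tagged $3$-words, not merely at the level of max-languages.
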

\begin{proof}
	The proof formalizes the already stated fact that 
	OP relations are encoded by tagged 3-words. Let $G$ be an ECF grammar.
For  all letters $a,b \in \Sigma$ we show that the following relations hold: 
\begin{center}
$
\begin{array}{ccc}
a \,\dot=\, b  \iff 
a \odot b \in \varphi_3(L(\overline G) ),
&
a \lessdot b  \iff 
a [ b \in \varphi_3 (L(\overline G) ),
&
a \gtrdot b  \iff 
a ] b \in \varphi_3 (L(\overline G) ).
\end{array}
$
\end{center}
If  $a \gtrdot b$, from the definition of OP grammar \cite{Floyd1963}, it 
follows that there are a sentential word $u X v$ with $i_1(v)= b$,
and  an $X$-grammatical word $w$ such that $t_1(w)=a$ or $t_2(w)=aY$ with $Y \in V_N$.  
In both cases,  for the tagged grammar $\overline G$ (see Definition \ref{def:taggedGrammar}), 
either  the substitution 
 $\rho(a) = a]$ or $\rho(Y)= \ 
]$ 
causes the 3-word $a ] b$ to be in $\Phi_3$, the tagged $k$-words of grammar $G$. 
(Notice that a wrong  choice for 
the symbol returned by $\rho(Y)$, i.e. $[$ and $\odot$,  is neutralized by the 
intersection with language $R$ and  does not show up in the tagged grammar.) 
Conversely, it is obvious that $a ] b \in \Phi_3$ implies $a \gtrdot b$.
\par
We omit the similar case $a \lessdot b$, and examine the case $a \dot= b$,  
which happens if 
there exists a rule containing in the right part as factor $ab$ or $aYb$. The 
respective substitutions  $\rho(a) = a\odot$ and $\rho(Y) = \odot$ produce the 
3-word $a \odot b \in \Phi_3$. The converse is also immediate. 
It follows that, for every pair $a, b\in \Sigma$,  grammar $G$ violates the OP condition  if, 
and only if, the HOP condition  is false for $k=3$.

On the other hand, we show a $\taggable$ language that is not an
OP language. Let $L=\{a^n (baab)^n \mid n \geq 1\}$.
For any grammar of $L$,
by applying a pumping
lemma, it is clear that the relations $a\lessdot a$ and
$b \gtrdot b$ are unavoidable, i.e.,  the 3-words $a[a, b]b$ are necessarily
present in $\Phi_3$.
But it can be checked that, no matter how we choose the
other precedence relations, either there is a conflict or the language generated by the
grammar fails to be $L$; this can be exhaustively proved by examining  all
possible non-conflictual choices of $\Phi_3\subset \Sigma^{\square 3}$.
\par
On the other hand, it is possible to check that the grammar 
$G_2 : S \to a S b a a b \cup a b a a b$ is in HOP(7), and its tagged 7-words 
are

$
\Phi_7 = \left\{
\begin{array}{llllll}
\# \odot \#[a \odot b, &
a \odot a \odot b]\#, &
\#[a[a \odot b, &
a[a \odot b \odot a, &
b]b \odot a \odot a, &
a \odot b]b \odot a,\\
a \odot a \odot b]b, &
a \odot b]\# \odot \#, &
a[a[a[a, &
b \odot a \odot a \odot b, &
\#[a \odot b \odot a, &
\# \odot \# \odot \# \odot \#,\\
a[a[a \odot b, &
\# \odot \# \odot \#[a, &
b]\# \odot \# \odot \#, &
\# \odot \#[a[a, &
\#[a[a[a, &
a \odot b \odot a \odot a 
\end{array}
\right\}
.$

\end{proof}

\par
It is known  that OP grammars  are structurally unambiguous, and  that OP 
languages are CF deterministic and reverse-deterministic.  
Since such properties immediately follow from the bottom-up parser for OP 
grammars, which is easily extended to 
$\taggable(k)$ grammars without changing the essential operations, 
the same properties  hold for any value of $k$.

\begin{theorem}\label{th:structUnambiguityDET}
    Every HOP grammar is  
structurally unambiguous. The HOP language family is properly included within 
the  deterministic and 
reverse-deterministic CF languages.
\end{theorem}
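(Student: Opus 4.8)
The plan is to extend to every odd $k$ the bottom-up shift--reduce parser that establishes these properties for $k=3$ (the OP case of Theorem~\ref{teo:op}), the only change being that each decision now inspects a window of $k$ consecutive terminals rather than a single adjacent pair. Given a HOP($k$) grammar $G$ and an input $w\in L(G)$, I would scan $\csharp\,[\,w\,]\,\csharp$ left to right, keeping on a stack the still-open left contexts. At each point the tag between the current terminal and the next is read off from the unique tagged $k$-word of $\Phi_k=\varphi_k(\csharp L(\overline G)\csharp)$ whose $\sigma$-image matches the surrounding terminals --- uniqueness being exactly the no-conflict condition~\eqref{eqConflict}. A ``$[$'' causes a push, a ``$\odot$'' lets the current handle grow at the same level, and a ``$]$'' causes a pop together with the reduction of the completed handle $[u]$ by the only rule whose right part contains $\sigma(u)$. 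For $k=3$ this is literally the computation of~\eqref{eqOPreduction}, and nothing in it depends on $k$ beyond the lookahead width, so the parser is deterministic.

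Structural unambiguity I would obtain from the remark following Definition~\ref{defLSTlist}: for every $w$ the set $\sigma^{-1}(w)\cap\Loc(\Phi_k)$ is empty or a singleton. Every parse tree of $w$ produces a tagged word of $L(\overline G)$ projecting onto $w$, all of whose tagged $k$-words lie in $\Phi_k$; hence all parse trees of $w$ produce the \emph{same} tagged word $\overline w$. On $\overline w$ the reduction $\leadsto_\Phi$ of Definition~\ref{defSLRlanguage} is deterministic --- each innermost handle is marked by $[\,\cdots]$ and carries a unique replacing tag --- and, the order of reductions being immaterial by the remark after that definition, $\overline w$ fixes a single nesting of handles, hence a single parse tree. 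Consequently $w$ has a unique tree and no two distinct words of $L(G_{(\,)})$ share a $\sigma'$-image: $G$ is structurally unambiguous.

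For the two inclusions I would present the left-to-right parser as a deterministic pushdown automaton --- the bounded $k$-lookahead is absorbed into the finite control, the stack holds the open left contexts, and the input is accepted exactly when it has been reduced to $\csharp\odot\csharp$ --- so that HOP $\subseteq$ deterministic CF. The rightmost parser, the mirror image in which ``$[$'' and ``$]$'' trade roles and the scan runs right to left (the rightmost reduction of Definition~\ref{defSLRlanguage}), is a HOP($k$) device for $L(G)^R$ and thus a DPDA for the reverse, giving HOP $\subseteq$ reverse-deterministic CF. Properness I would witness by the self-reverse language $L=\{a^nb^n\mid n\ge1\}\cup\{b^na^n\mid n\ge1\}$: the first letter selects the sublanguage, so $L$ is deterministic and (being its own reverse) reverse-deterministic, yet for $N\gg k$ both $a^Nb^N$ and $b^Na^N$ occur, and by Lemma~\ref{lemmaSLTlistLang} the tag inside each run of $a$'s is uniform; but the parse of $a^Nb^N$ forces it to open handles while that of $b^Na^N$ forces it to close them, so every $\Phi_k$ is conflictual on $a\,[\,a\cdots a$ against $a\,]\,a\cdots a$ or fails to yield one of the two sublanguages.

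The step I expect to be delicate is the determinism of the tag lookup on an \emph{arbitrary} input, since overlapping $k$-windows might a priori prescribe clashing tags for one slot; this is the \emph{local parsability} of HOP($k$), which the paper postpones. For the theorem it is enough that on a genuine $w\in L(G)$ the tags coincide with those of its unique $\overline w\in L(\overline G)$, so the lookups are consistent and no reduction clashes, while on a non-member the automaton simply blocks and rejects. The other point needing care is the lower bound for the witness, where ``not HOP($k$) for any $k$'' must be argued uniformly in $k$ via Lemma~\ref{lemmaSLTlistLang}, not merely for a fixed width.
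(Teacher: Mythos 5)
Your proposal is correct and follows the paper's overall strategy, but it diverges in two places worth noting. For the first claims the paper is even terser than you are: it simply observes that structural unambiguity, determinism and reverse-determinism ``immediately follow from the bottom-up parser for OP grammars, which is easily extended to $\taggable(k)$ grammars without changing the essential operations,'' and offers no further argument; your additional justification of structural unambiguity via the singleton property of $\sigma^{-1}(w)\cap\Loc(\Phi_k)$ and the order-independence of reductions is sound and actually more self-contained than what is printed, and your flagging of the tag-lookup consistency issue is exactly the local-parsability point the paper defers to the conclusions. For properness the paper uses a different witness, $L=\{a^n b a^n \mid n\ge 1\}$: pumping forces the first $a$-run to be uniformly tagged with ``$[$'' and the second with ``$]$'', so the remark after Lemma~\ref{lemmaSLTlistLang} yields two conflictual $k$-words $a[a\cdots a$ and $a]a\cdots a$ already from a single language of one ``shape.'' Your witness $\{a^nb^n\}\cup\{b^na^n\}$ works by the same pumping-plus-uniform-tag pattern (the $a$-runs of the two sublanguages must carry the same tag yet must open in one and close in the other), but it needs the extra check that the union is still deterministic and self-reverse, so the paper's choice is marginally more economical. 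Both arguments are uniform in $k$, as required.
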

\proof
We only need to prove the last statement. Let $L=\{a^n b a^n \mid n \geq 1\}$, 
which is deterministic and reverse deterministic. 
For any grammar $G$ of $L$, a straightforward application of the pumping lemma shows that, for any $k\geq 3$,  
$\varphi_k(L(\overline G))$
includes a 
word $a s_1 a s_2 \dots  a$  containing two distinct tags ``$[$'' and ``$]$'', 
therefore it also includes two conflictual $k$-words, 
because of the 
remark following  Lemma \ref{lemmaSLTlistLang}.
\qed

We show the significant connection between the $\taggable$ languages  and the $\maxtag$ languages, which motivates their appellation of max-languages.

\medskip
\noindent {\bf Max-grammars }
We prove by a fairly articulate construction that if $L$ is a max-language, 
i.e. 
$L = \ltaggable(\Phi_k)$, for some $\Phi_k \subseteq \Sigma^{\square k}$, 
then $L$ is generated by a grammar $G \in \taggable(k, \Phi_k)$. Moreover, $L$ 
is the largest language in $\taggable(k, \Phi_k)$.

Preliminarily, we define, for an arbitrary SLT language
 over a generic alphabet $\Upsilon$, a nondeterministic FA   which is symmetrical 
w.r.t. the scanning direction; this property contrasts with the standard deterministic sliding-window 
device, e.g., in~\cite{DBLP:journals/tcs/Caron00a}.
\begin{definition}[symmetrical FA]
	Let $F_k$ be a $k$-word set. 
	The $k$-\emph{symmetrical automaton} $A$ associated to $F_k$ is  obtained 
	by trimming the FA $A_0 = (\Upsilon, Q, \delta, I , T)$ where:
	\begin{itemize}
		\item $Q= \{ (\beta, \alpha)\in \Upsilon ^{k-1} \times \Upsilon ^{k-1} 
		\mid  
		\beta, \alpha \in f_{k-1}(F_k) \}$
		\item
		$(\beta, \alpha) \stackrel a \to (\beta', \alpha') \in \delta \text{ 
		if, and only if, }
		\beta'= t_{k-1}(\beta \, a )  \wedge \alpha= i_{k-1}(a \, \alpha')$
		\item
		$I= \{ (t_{k-1}(\csharp), \alpha)\in Q \}$,
		$T=\{ (\beta, i_{k-1}(\csharp)) \in Q \}$. \qed
	\end{itemize}
\end{definition}	
Intuitively, $\beta$ and  $\alpha$ represent the look-back and look-ahead 
	($k-1$)-words of  state
	$(\beta, \alpha)$.

\par
See Figure \ref{fig:simmaut}  for illustration. Two relevant properties of the 
\emph{symmetrical FA} $A$
are:
\begin{enumerate}
	\item $
	L(A)= \Loc(F_k)
	$, since, on each accepting path, the $k$-factors are by construction those of 
	$F_k$.
	
	\item The automaton $A$ is unambiguous. Consider a word $x = uyv$, and 
	assume by contradiction that there are two accepting paths in $A$, with state sequences 
	$\pi_u \pi_y \pi_v$, $\pi_u \pi'_y \pi_v$ and the same label $x$ ($u$ and 
	$v$ could be $\varepsilon$). But, by construction of $A$, if $\pi_y = q_1 
	q_2 \ldots q_t$ and  $\pi'_y = q'_1 q'_2 \ldots q'_t$, then $q_1 = 
	(t_{k-1}(u), i_{k-1}(y))$ and also $q'_1 = (t_{k-1}(u), i_{k-1}(y))$. This 
	holds for every subsequent step, hence $\pi'_y = \pi_y$, so the two paths 
	must be identical.
\end{enumerate}

Given  $\Phi \subseteq \Sigma^{\square k}$, we construct a grammar, 
denoted by  $\overline{G}_\Phi$, that generates the max-language $\ltagged(\Phi)$. The 
construction is also illustrated in Example~\ref{ex:grammar}. 

\begin{definition}[max-grammar 
construction]\label{defMaxGrammarConstruction}\label{def:grammgraph}

	Let $\Phi \subseteq \Sigma^{\square k}$, 
and let $A_\Phi=(\Sigma\cup \Delta, Q, \delta, I , T)$ be the symmetrical 
automaton recognizing $\Loc(\Phi)$.
	The grammar $\overline{G}_\Phi = (\Sigma \cup \Delta, V_N, P, S)$  
	called \emph{tagged max-grammar}, 
	is obtained by reducing (in the sense of trimming the useless parts) the grammar constructed as follows.
	\begin{itemize}
		\item 
		$V_N$ is a subset of $Q \times Q$ such that 
		$( q_1, q_2 ) \in V_N$ if $q_1 = (\beta_1, [ \gamma_1)$
		and   
		$q_2= (\gamma_2 ], \alpha_2)$, for some $\beta_1, \gamma_1, \gamma_2, 
		\alpha_2$. Thus a nonterminal $X$ is also identified by $( 
		\beta_1, [ \gamma_1, \gamma_2 ], \alpha_2  )$.
		
		\item The axiom set is $S = I \times T$.
		
		\item 
		Each rule  $X \to R_X \in P$ is such that the right part is defined by 
        $M_X = \left(V \cup \Delta, Q_X, \delta_X, \{p_{I}\}, \{p_{T}\} 
        \right)$,
an FA		where $Q_X \subseteq Q$,    next specified.
		
		Let $X = ( \beta_X, \alpha_X, \beta'_X, \alpha'_X )$. Then:
		$p_I = (\beta_X, \alpha_X)$, $p_T = (\beta'_X, \alpha'_X)$,
	    \item
		The graph of the transition relation is $\delta_X = (\delta \cup \delta') - \delta''$, where \\
		$
		\delta' = \left\{ 
		( \beta_1, \alpha_3 ) 
		\stackrel
		{( \beta_1, \alpha_1, \beta_2, \alpha_2 )}
		{\longrightarrow}
		( \beta_3, \alpha_2 )
		\ \vrule
		\begin{array}{l}
		( \beta_1, \alpha_1, \beta_2, \alpha_2 ) \in V_N,\\
		( \beta_1, \alpha_3 ) 
		\stackrel
		{\odot}
		{\longrightarrow}
		( \beta_3, \alpha_2 )
		\in \delta
		\ \lor \\
		( \beta_1, \alpha_3 )  = p_I \land
		( \beta_1, \alpha_3 ) 
		\stackrel
		{[}
		{\longrightarrow}
		( \beta_3, \alpha_2 )
		\in \delta
		\ \lor \\
		( \beta_3, \alpha_2 )  = p_T \land
		( \beta_1, \alpha_3 ) 
		\stackrel
		{]}
		{\longrightarrow}
		( \beta_3, \alpha_2 )
		\in \delta
		
		\end{array}
		\right\}
		$\\
		$
		\delta'' =
		\begin{array}{l}
		\left\{
		q' \stackrel{[}{\longrightarrow} q'' \in \delta \ \vrule\  q' \ne p_I
		\right\}
		\cup
		\left\{
		q' \stackrel{]}{\longrightarrow} q'' \in \delta  \ \vrule\  q'' \ne p_T
		\right\}
		\cup\\
		\left\{
		q' \stackrel{x}{\longrightarrow} p_I \in \delta \right\}
		\cup
		\left\{
		p_T \stackrel{x}{\longrightarrow} q' \in \delta \right\}.
		\end{array}
		$
	\end{itemize}
	Intuitively, $\delta'$ adds transitions with nonterminal labels between any 
	two states already 
	linked by a tag-labeled transition, which are  ``compatible'' with the 
	nonterminal name (i.e. with the same look-back and look-ahead). The transitions $\delta''$ to be deleted are: those labeled by
	tags ``[`` or ``]'' that are not initial or final, and those reentering the initial 
	or final states. 

	Define the \emph{max-grammar} as 
	$G_\Phi = \left(V_N, \Sigma, \left\{ X \to \sigma(R_X) \mid X \to R_X \in P \right\}, 
	S \right)$.
	
	The \emph{grammar graph} $\Gamma(\overline{G}_\Phi)$ of $\overline{G}_\Phi$ 
	is a graph containing all the arcs and states of the symmetrical automaton 
	$A_\Phi$ associated to $\Phi$, together with all the arcs labelled by nonterminals,
    defined by the above construction of $\delta'$.
\qed
\end{definition}

The grammar graph   synthetically represents all the rules of the 
max-grammar $\overline{G}_\Phi$ and will be used in the proof of the forthcoming lemma.
Each rule right part is a subgraph starting 
with a label ``[``,  ending with a label ``]'', and containing only terminals and  
$\odot$ tags; the rule left part is denoted by the pair of initial and final 
states of the subgraph.

\begin{figure}
	\begin{center}
		\scalebox{1.1}{
			\begin{tikzpicture}[every edge/.style={draw,solid}, 
			node distance=2.8cm, 
			auto, 
			every state/.style={draw=black!100,scale=0.65}, 
			>=latex]
			
			\node[initial by arrow, initial text=,state] (q1) 
			{{$ (\odot \#, [a )$}};		
			\node[state] (q2)  [below of=q1, xshift=0cm]  {{$( 
					\#[, 
					a[  
					)$}};
			\node[state] (q4)  [below of=q2, 
			xshift=0cm]  {{ $( [a, [a )$}};
			
			\node[state] (q3)  [right of=q1, xshift=0cm]  {{$( 
					\#[, a 
					\odot )$}};
			\node[state] (q5)  [below of=q3, xshift=0cm]  {{$( a[, a\odot 
					)$}};
			\node[state] (q6)  [below of=q5, xshift=0cm]  {{$( 
					a[, a[ 
					)$}};
			
			\node[state] (q7)  [right of=q3, xshift=0cm]  {{$( [a, \odot 
					c )$}};
			\node[state] (q8)  [below of=q7, xshift=0cm]  {{$( 
					a\odot, c\odot )$}};
			
			\node[state] (q12) [right of=q7, xshift=0cm]  {{$( b\odot, 
					b\odot )$}};
			
			\node[state] (q11) [right of=q12, xshift=0cm] {{$( \odot b, 
					\odot b )$}};
			\node[state] (q10) [below of=q11, xshift=0cm] {{$( c \odot, b 
					\odot )$}};
			\node[state] (q9)  [below of=q10, xshift=0cm] {{$( \odot c, 
					\odot b )$}};
			
			\node[state] (q13) [right of=q11, xshift=0cm] {{$( b\odot, b] 
					)$}};
			\node[state] (q14) [below of=q13, xshift=0cm] {{$( \odot b, 
					]c )$}};
			\node[state] (q15) [below of=q14, xshift=0cm ]
			{{$ ( b], c\odot ) $}};
			
			\node[state] (q16) [right of=q15, xshift=0cm] {{$( ]c, 
					\odot b )$}};
			
			\node[state] (q17) [right of=q16, xshift=0cm] {{$( c \odot, 
					b] )$}};
			\node[state] (q18) [above of=q17, xshift=0cm] {{$( 
					\odot 
					b,]\# )$}};
			\node[state] (q19) [above of=q18, xshift=0cm, accepting] 
			{{$ ( b ], \# \odot )$}};
			\path[->]
			
			(q1) edge  node {$[$} (q2)
			(q1) edge  node {$[$} (q3)
			(q2) edge  node {$a$} (q4)
			(q3) edge  node {$a$} (q7)
			(q4) edge [bend right, above] node {$[$} (q6)
			(q6) edge [bend right, above] node {$a$} (q4)
			(q4) edge node {$[$} (q5)
			(q5) edge node {$a$} (q7)
			(q7) edge [left] node {$\odot$} (q8)
			(q8) edge [bend right, above] node {$c$} (q9)
			(q9) edge node {$\odot$} (q10)
			(q10) edge node {$b$} (q11)
			(q9) edge [bend right, below] node {$\odot$} (q17)
			(q11) edge [bend right, above] node {$\odot$} (q12)
			(q12) edge [bend right, above] node {$b$} (q11)
			(q11) edge node {$\odot$} (q13)
			(q13) edge node {$b$} (q14)
			(q13) edge [bend left] node {$b$} (q18)
			(q14) edge node {$]$} (q15)
			(q15) edge  node {$c$} (q16)
			(q17) edge [bend right, above] node {$b$} (q14)
			(q16) edge  node {$\odot$} (q17)
			(q17) edge  node {$b$} (q18)
			(q18) edge  node {$]$} (q19)
			;
 			\end{tikzpicture}}
            \\
            (i)
		\scalebox{1.1}{
			\begin{tikzpicture}[every edge/.style={draw,solid}, 
			node distance=2.8cm, 
			auto, 
			every state/.style={draw=black!100,scale=0.65}, 
			>=latex]
			
			\node[initial by arrow, initial text=,state] (q1) 
			{{$ (\odot \#, [a )$}};		
			\node[state] (q3)  [right of=q1, xshift=0cm]  {{$( 
					\#[, a 
					\odot )$}};
			
			\node[state] (q7)  [right of=q3, xshift=0cm]  {{$( [a, \odot 
					c )$}};
			\node[state] (q8)  [below of=q7, xshift=0cm]  {{$( 
					a\odot, c\odot )$}};
			
			\node[state] (q12) [right of=q7, xshift=0cm]  {{$( b\odot, 
					b\odot )$}};
			
			\node[state] (q11) [right of=q12, xshift=0cm] {{$( \odot b, 
					\odot b )$}};
			\node[state] (q10) [below of=q11, xshift=0cm] {{$( c \odot, b 
					\odot )$}};
			\node[state] (q9)  [below of=q10, xshift=0cm] {{$( \odot c, 
					\odot b )$}};
			
			\node[state] (q13) [right of=q11, xshift=0cm] {{$( b\odot, b] 
					)$}};
			\node[state] (q17) [right of=q16, xshift=0cm] {{$( c \odot, 
					b] )$}};
			\node[state] (q18) [above of=q17, xshift=0cm] {{$( 
					\odot 
					b,]\# )$}};
			\node[state] (q19) [above of=q18, xshift=0cm, accepting] 
			{{$ ( b ], \# \odot )$}};
			\path[->]
			
			(q1) edge  node {$[$} (q3)
			(q3) edge  node {$a$} (q7)
			(q7) edge [left] node {$\odot$} (q8)
			(q8) edge [bend right, above] node {$c$} (q9)
			(q9) edge node {$\odot$} (q10)
			(q10) edge node {$b$} (q11)
			(q9) edge [bend right, below] node {$\odot$} (q17)
			(q11) edge [bend right, above] node {$\odot$} (q12)
			(q12) edge [bend right, above] node {$b$} (q11)
			(q11) edge node {$\odot$} (q13)
			(q13) edge [bend left] node {$b$} (q18)
			(q17) edge  node {$b$} (q18)
			(q18) edge  node {$]$} (q19)
			(q7) edge [bend left] node {$( [a, [a, b], c \odot )$} (q8)
			;
			\end{tikzpicture}}
            \\
            (ii)
            \\
		\scalebox{1.1}{
			\begin{tikzpicture}[every edge/.style={draw,solid}, 
			node distance=2.8cm, 
			auto, 
			every state/.style={draw=black!100,scale=0.65}, 
			>=latex]
			
			\node[initial by arrow, initial text=,state] (q4)  [below of=q2, 
			xshift=0cm]  {{$ ( [a, [a ) $}};
			
			\node[state] (q5)  [below of=q3, xshift=0cm]  {{$( a[, a\odot 
					)$}};
			
			\node[state] (q7)  [right of=q3, xshift=0cm]  {{$( [a, \odot 
					c )$}};
			\node[state] (q8)  [below of=q7, xshift=0cm]  {{$( 
					a\odot, c\odot )$}};
			
			\node[state] (q12) [right of=q7, xshift=0cm]  {{$( b\odot, 
					b\odot )$}};
			
			\node[state] (q11) [right of=q12, xshift=0cm] {{$( \odot b, 
					\odot b )$}};
			\node[state] (q10) [below of=q11, xshift=0cm] {{$( c \odot, b 
					\odot )$}};
			\node[state] (q9)  [below of=q10, xshift=0cm] {{$( \odot c, 
					\odot b )$}};
			
			\node[state] (q13) [right of=q11, xshift=0cm] {{$( b\odot, b] 
					)$}};
			\node[state] (q14) [below of=q13, xshift=0cm] {{$( \odot b, 
					]c )$}};
			\node[state] (q15) [below of=q14, xshift=0cm, accepting]
			{{$ ( b], c\odot )$}};
			
			\node[state] (q17) [right of=q16, xshift=0cm] {{$( c \odot, 
					b] )$}};
			\path[->]
			
			(q4) edge node {$[$} (q5)
			(q5) edge node {$a$} (q7)
			(q7) edge [left] node {$\odot$} (q8)
			(q7) edge [bend left] node {$( [a, [a, b], c \odot )$} (q8)
			(q8) edge [bend right, above] node {$c$} (q9)
			(q9) edge node {$\odot$} (q10)
			(q10) edge node {$b$} (q11)
			(q9) edge [bend right, below] node {$\odot$} (q17)
			(q11) edge [bend right, above] node {$\odot$} (q12)
			(q12) edge [bend right, above] node {$b$} (q11)
			(q11) edge node {$\odot$} (q13)
			(q13) edge node {$b$} (q14)
			(q14) edge node {$]$} (q15)
			(q17) edge [bend right, above] node {$b$} (q14)
			;
			\end{tikzpicture}}
            \\
            (iii)
		\caption{
			(i) Symmetrical FA $A_\Phi$ of Example~\ref{ex:FirstExample}.
			(ii) Automaton of the rule 
            $X 
            \to 
            [ a (\odot \cup  Y) c \odot (b \odot)^* b ]$ 
            of 
			Example~\ref{ex:grammar}.
			(iii) Automaton of the rule  
            $Y 
            \to 
            [ a (\odot \cup  Y) c \odot (b \odot)^* b ]$ 
            of
			Example~\ref{ex:grammar}.
		}\label{fig:rule}\label{fig:simmaut}								
		
	\end{center}
\end{figure}
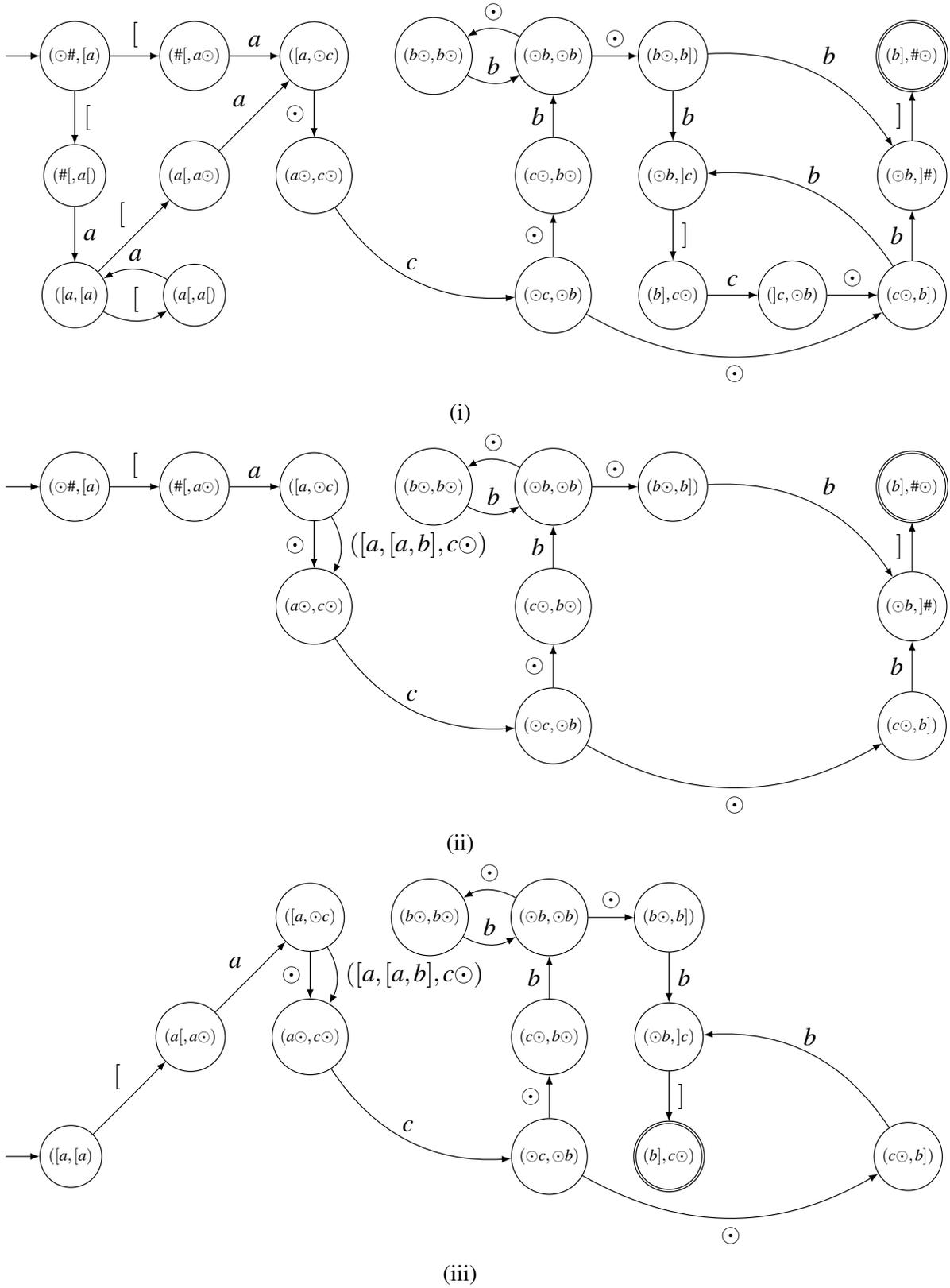

\begin{example}\label{ex:grammar}
	We show the construction of the max-grammar  for the 
	tagged 3-word set $
\Phi= \{ 
\#\odot\#,$
$\# [ a,\ 
b ] \#,\ $ 
$b ] c,\ 
c \odot b,\ 
b \odot b,\ 
a \odot c,\ 
a [ a
\}
$ of Example~\ref{ex:FirstExample}.
Its symmetrical automaton $A_\Phi$ is reported in Figure~\ref{fig:simmaut}~(i).
	The nonterminals are included in the set
	$\{ 
	( \odot \#, [ a), 
	( [ a, [ a )
	\}
	\times
	\{ 
	( b ], c \odot ), 
	( b ], \# \odot )
	\}
	$, but 
	$( \odot \#, [ a, b], c \odot)$ and
	$( [ a, [ a, b ], \#  \odot )$
	are unreachable, because they are neither axioms nor they are  transition 
	labels in the grammar graph.
    Thus only two nonterminals are left: 
	the axiom $X = ( \odot \#, [a, b], \# \odot )$ 
    and $Y = ( [a, [a, b], c \odot )$ which occurs on the transition
	from $( [a, \odot c )$ to $( a \odot, c \odot )$.
    The two rules of the resulting grammar are 
            $X           \to 
            [ a (\odot \cup  Y) c \odot (b \odot)^* b ]$
            and
            $Y  
            \to 
            [ a (\odot \cup  Y) c \odot (b \odot)^* b ]$ 
    and their automata are show in Figure~\ref{fig:rule}~(ii) and (iii), respectively.
\end{example}

\par
By construction, the rules of any tagged max-grammar  $\overline{G}_\Phi$ have some 
properties worth noting:
\begin{enumerate}
	\item 
	For each rule $X \to M_X$, 
	$R_X \subseteq 
	(V_N \cup \{[\}) \cdot \Sigma \cdot
	\left(
	\left(
	V_N \cup \{ \odot \}
	\right)
	\cdot \Sigma
	\right)^*
	\cdot
	(V_N \cup \{]\}).
	$
	This fact implies that $\overline{G}_\Phi$ and $G_\Phi$ are in operator 
	form.  
	
	\item For each rule $X \to M_X$ in $P$, $M_X$ is an unambiguous FA.
\end{enumerate}

\noindent
We prove that the languages defined by  max-grammars  and  by reductions of 
Definition~\ref{defSLRlanguage} coincide.

\begin{lemma}\label{thReductionsEquivGrammar}
	Let $\Phi \subseteq \Sigma^{\square k}$, and let $\overline{G}_\Phi$ and 
	$G_\Phi$ be  the max-grammars of Definition~\ref{defMaxGrammarConstruction}. 
	Then 
	$L(\overline{G}_\Phi)= \ltagged(\Phi)$ and 
	$L(G_\Phi) = \ltaggable(\Phi)$.
\end{lemma}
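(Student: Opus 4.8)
The plan is to establish the tagged identity $L(\overline{G}_\Phi) = \ltagged(\Phi)$ first; the untagged identity then comes essentially for free, since $G_\Phi$ is obtained from $\overline{G}_\Phi$ by replacing each right part $R_X$ with $\sigma(R_X)$, so that $L(G_\Phi) = \sigma(L(\overline{G}_\Phi)) = \sigma(\ltagged(\Phi)) = \ltaggable(\Phi)$ by Definition~\ref{defSLRlanguage}. For the tagged identity I would prove both inclusions through a single correspondence: a derivation tree of $\overline{G}_\Phi$, read from its leaves up to its root, mirrors a reduction sequence $\csharp\,[\,w\,]\,\csharp \stackrel{*}{\leadsto}_\Phi \csharp\odot\csharp$ carried out from the innermost handles outwards. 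The axiom rule, whose initial state has look-back $t_{k-1}(\csharp)$ and look-ahead beginning with $[$, and whose final state has look-ahead $i_{k-1}(\csharp)$, contributes exactly the outermost pair $[\;]$ of the framed word $\csharp\,[\,w\,]\,\csharp$, so throughout I identify the word $[w]$ generated from the axiom with $w \in \ltagged(\Phi)$.

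The heart of the argument is an inductive claim about a single nonterminal $X = (\beta_X, [\gamma_X, \gamma'_X], \alpha_X)$, whose rule automaton $M_X$ runs from $p_I = (\beta_X, [\gamma_X)$ to $p_T = (\gamma'_X], \alpha_X)$: I would show that $X \stackrel{*}{\Rightarrow}_{\overline{G}_\Phi} z$ holds precisely for those bracketed words $z = [u]$ that (i) label a walk of $A_\Phi$ from $p_I$ to $p_T$ --- equivalently, by property~(1), every $k$-window of $z$ read between the contexts $\beta_X$ and $\alpha_X$ lies in $\Phi$ --- and (ii) reduce, under $\stackrel{*}{\leadsto}_\Phi$, to the single tag labelling the $\delta$-arc that $X$ replaced when $\delta'$ was built: the opening $[$ if $X$ sits on an initial arc, the closing $]$ if on a final arc, and $\odot$ otherwise. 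The look-back/look-ahead pair $(\beta_X, \alpha_X)$ recorded in the name of $X$ is exactly the window context making this reduction legal, while the unambiguity of $A_\Phi$ (property~(2)) forces the target tag to be unique, matching the non-conflictuality of $\Phi$. The induction interlocks the two clauses through the nested expansions: one rule step rewrites $X$ into a word over $\Sigma \cup \{\odot\}$ with embedded nonterminals $Y_i$; each $Y_i$ expands by hypothesis to a sub-handle $[v_i]$ reducing to the tag of the $\delta$-arc it parallels, so collapsing all inner handles turns $z$ back into a pure $\delta$-walk of $M_X$, which as a single handle then reduces to the tag of $X$.

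Granting the claim, the inclusion $L(\overline{G}_\Phi) \subseteq \ltagged(\Phi)$ follows by traversing the derivation tree of a sentential $w$ bottom-up: each leaf nonterminal yields an innermost handle $[a_1 \odot \cdots \odot a_n]$, which the claim lets me cancel to its prescribed tag while staying inside $\Loc(\Phi)$; performing this at every node in leaves-to-root order drives $\csharp\,[\,w\,]\,\csharp$ down to $\csharp\odot\csharp$. For the reverse inclusion I would run a witnessing reduction sequence for $w \in \ltagged(\Phi)$ backwards: each cancellation of a handle $[u]$ to a tag $s$ is un-done as an application of a rule $X \to [\,\cdots]$, where the $\delta'$ construction supplies exactly the nonterminal $X$ whose replaced arc carries $s$ and whose look-back/look-ahead match the surrounding window; nesting the un-reductions according to handle containment reassembles a legal derivation tree rooted at the appropriate axiom.

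The step I expect to be the real obstacle is the window-alignment buried in clause (i) of the inductive claim: when an inner handle $[v_i]$ collapses to its tag, the $k$-letter windows straddling that position must coincide with the windows that $M_X$ actually reads along the $\delta'$-arc labelled by the corresponding nonterminal. Making this precise amounts to checking that the source state $(\beta_1, \alpha_3)$ and target state $(\beta_3, \alpha_2)$ of a $\delta'$-arc agree, on their look-back and look-ahead components, with the end states of the sub-walk generated by that nonterminal --- that is, that the $(k-1)$-word $\gamma_i$ opening $[v_i]$ and the one closing it are consistent both with the parent walk and with the reduced tag. This is exactly the bookkeeping the symmetric $(\text{look-back},\text{look-ahead})$ state design was introduced to handle, and I would discharge it by a direct comparison of the defining equations of $\delta$ and $\delta'$ against the condition $\varphi_k(\cdot)\subseteq\Phi$ before and after one reduction.
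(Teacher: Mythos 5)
Your proposal is correct, and its engine is the same as the paper's --- namely, the observation that the look-back/look-ahead states of the symmetrical automaton $A_\Phi$ record exactly the context that legitimizes a reduction, and that the construction of $\delta'$ places each nonterminal arc precisely in parallel with the tag arc it replaces --- but the inductive scaffolding is genuinely different. The paper proves both inclusions at once by a single global induction on the length of the reduction sequence: it introduces a \emph{path equivalence} $w\equiv_\pi w'$ on the grammar graph $\Gamma(\overline{G}_\Phi)$ and maintains as invariant that, after $t$ steps, the partially reduced tagged word $w^{(t)}$ and the sentential form $\tilde w^{(t)}$ label paths with the same state sequence; cancelling a handle to a tag $s$ between states $q_u,q_v$ of $A_\Phi$ then corresponds exactly to rewriting it to the nonterminal on the parallel arc $q_u\stackrel{X}{\longrightarrow}q_v$ of $\Gamma(\overline{G}_\Phi)$, and conversely. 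You instead localize the invariant to each nonterminal, characterizing $L(\overline{G}_\Phi,X)$ as the set of bracketed words that are window-consistent in the context recorded in $X$'s name and that reduce to the tag of the arc $X$ replaced, and you then reassemble the derivation tree; the ``window-alignment'' obstacle you single out is precisely the bookkeeping the paper's path equivalence is designed to absorb, and your plan for discharging it (comparing the defining equations of $\delta$ and $\delta'$ across a single reduction) is the right one. The paper's version buys a shorter write-up because one invariant covers the whole word at every step; yours buys an explicit description of the language of each nonterminal, which is more informative but must be stated with one extra care: the tag in your clause (ii) is determined by the \emph{occurrence} of $X$, that is, by the source/target state pair of the $\delta'$-arc it labels, not by the nonterminal name alone (the same 4-tuple may label an initial arc, replacing $[$, in one rule automaton and a $\odot$-arc in another), so the inductive claim should be quantified per arc, with uniqueness of the target tag at each occurrence then supplied by the non-conflictuality of $\Phi$.
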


\begin{proof}
	It suffices to consider $\overline{G}_\Phi$, since $G_\Phi$ 
	has the same structure.
	We need also  the symmetrical FA $A_\Phi $, and the grammar graph 
	$\Gamma(\overline{G}_\Phi)$.
	Notice that  $A_\Phi $ and $\Gamma(\overline{G}_\Phi)$ have the same set of 
	states, and that $A_\Phi $ is a sub-graph of $\Gamma(\overline{G}_\Phi)$, 
	which only differs by the absence of nonterminally-labeled arcs.
	\par
	We say that two  words $w$ and $w'$   are equivalent on a sequence of states $\pi = 
	q_1, q_2, \ldots, q_n $ (or \emph{path equivalent}), written  $w \equiv_\pi 
	w'$, 
	iff in $\Gamma(\overline{G}_\Phi)$ there exist two paths, both with the state sequence 
	$\pi$, such that $w$ and $w'$ are their labels.
	
	\par
	We start from a string $w^{(0)} \in \Loc(\Phi)$; we will show that, for 
	some 
	$m > 0$ and for some axiom $W \in S$:\\
	$
	w^{(0)} \leadsto_{\Phi} 
	w^{(1)} \leadsto_{\Phi} 
	\ldots \leadsto_{\Phi} 
	w^{(m)} = \csharp \odot \csharp
	\ \text{ iff } \ 
	\tilde{w}^{(0)} \Longleftarrow_{\overline{G}_\Phi} 
	\tilde{w}^{(1)} \Longleftarrow_{\overline{G}_\Phi} 
	\ldots \Longleftarrow_{\overline{G}_\Phi} 
	\tilde{w}^{(m)} = W,
	$\\
	where 
	$ \tilde{w}^{(0)}  = w^{(0)}$,
	and $\forall i$, $\exists \pi_{i}$ : 
	$ \tilde{w}^{(i)}  \equiv_{\pi_i} w^{(i)}$.
	
	\par
	We prove the theorem by induction on the reduction steps.
	
	\noindent {\bf Base case:} 
	Consider $\tilde{w}^{(0)}$: it is by definition  
	$ \tilde{w}^{(0)}  = w^{(0)}$, hence
	$\exists \pi :  \tilde{w}^{(0)}  \equiv_\pi w^{(0)}$.
	
	\noindent {\bf Induction case:}
	First, we prove that $w^{(t)} \leadsto_{\Phi} w^{(t+1)}$ implies
	$\tilde{w}^{(t)} \Longleftarrow_{\overline{G}_\Phi} \tilde{w}^{(t+1)}$
	with $\tilde{w}^{(t+1)} \equiv_{\pi_{t+1}} w^{(t+1)}$.
	To perform the reduction, we need a handle, let it be called $x$, such that 
	$w^{(t)} = u x v \leadsto w^{(t+1)} = u s v$, $s \in \Delta$. 
	By induction hypothesis, we know that 
	$ \tilde{w}^{(t)} \equiv_{\pi_t} w^{(t)} =  u x v$, 
	therefore 
	$\tilde{w}^{(t)} = \tilde{u} \tilde{x} \tilde{v}$ with
	$\tilde{u} \equiv_{\pi'_t} u$, 
	$\tilde{x} \equiv_{\pi''_t} x$, 
	and
	$\tilde{v} \equiv_{\pi'''_t} v$, 
	with $\pi_t = \pi'_t  \pi''_t  \pi'''_t$. 
	The equivalence $\tilde{x} \equiv_{\pi''_t} x$, with $x$ handle, implies 
	that there is a right part 
	of a rule $X \to M_X \in P$, such that
	$\tilde{x} \in R_X$. 
	Hence,
	$\tilde{w}^{(t)} \Longleftarrow_{\overline{G}_\Phi} \tilde{w}^{(t+1)} = 
	\tilde{u} X \tilde{v}$ and
	$X = ( t_{k-1}(\csharp u), i_{k-1}(x v \csharp), t_{k-1}(\csharp u 
	x), i_{k-1}(v \csharp) )$.
	The reduction relation implies that in $A_\Phi $ (and therefore also in 
	$\Gamma(\overline{G}_\Phi)$) 
	there is a path with states $\pi_{t+1}$ and labels $w^{(t+1)}$: 
	call $\pi'_{t+1}$ the states of its prefix with label $u$, and 
	$\pi''_{t+1}$ those of its suffix with label $v$.
	Let us call $q_u$ the last state of $\pi'_{t+1}$ and $q_v$ the first state 
	of $\pi''_{t+1}$.
	By construction of $\overline{G}_\Phi$, in $\Gamma(\overline{G}_\Phi)$ 
	there is a transition 
	$q_u \stackrel{X}{\longrightarrow} q_v$,
	while in $A_\Phi $ there is 
	$q_u \stackrel{s}{\longrightarrow} q_v$.
	From this it follows $\tilde{w}^{(t+1)} \equiv_{\pi'_{t+1} \pi''_{t+1}} 
	w^{(t+1)}$. 
	\par
	We now prove that
	$\tilde{w}^{(t)} \Longleftarrow_{\overline{G}_\Phi} \tilde{w}^{(t+1)}$ 
	implies
	$w^{(t)} \leadsto_{\Phi} w^{(t+1)}$, 
	with $w^{(t+1)} \equiv_{\pi_{t+1}} \tilde{w}^{(t+1)}$.
	By definition of derivation, it is 
	$\tilde{w}^{(t)} = \tilde{u} \tilde{x} \tilde{v} 
	\Longleftarrow_{\overline{G}_\Phi} \tilde{w}^{(t+1)} = \tilde{u} X 
	\tilde{v}$ for some $X \in V_N$.
	By induction hypothesis, we know that 
	$\tilde{u} \tilde{x} \tilde{v} = \tilde{w}^{(t)} \equiv_{\pi_t} w^{(t)}$, 
	hence 
	$w^{(t)} = u x v$ with
	$\tilde{u} \equiv_{\pi'_t} u$, 
	$\tilde{x} \equiv_{\pi''_t} x$, 
	and
	$\tilde{v} \equiv_{\pi'''_t} v$, 
	with $\pi_t = \pi'_t  \pi''_t  \pi'''_t$. 
	From this it follows that
	$X = ( t_{k-1}(\csharp u), i_{k-1}(x v \csharp), t_{k-1}(\csharp u 
	x), i_{k-1}(v \csharp) )$, and 
	that $x$ must be an handle. Therefore,  
	$w^{(t)} = u x v \leadsto w^{(t+1)} = u s v$, $s \in \Delta$, and 
	in $A_\Phi$ (and in $\Gamma(\overline{G}_\Phi)$) 
	there is a path with states $\pi_{t+1}$ and labels $w^{(t+1)}$: 
	call $\pi'_{t+1}$ the states of its prefix with label $u$, and 
	$\pi''_{t+1}$ those of its suffix with label $v$.
	Let us call $q_u$ the last state of $\pi'_{t+1}$ and $q_v$ the first state 
	of $\pi''_{t+1}$.
	By construction of $\overline{G}_\Phi$, in $\Gamma(\overline{G}_\Phi)$ 
	there is a transition $(q_u, X, q_v)$, 
	while in $A_\Phi $ there is $(q_u, s, q_v)$. 
	Hence $\tilde{w}^{(t+1)} \equiv_{\pi'_{t+1} \pi''_{t+1}} w^{(t+1)}$.
\end{proof}

\begin{theorem}\label{th:InclusionMaxlang}
 Let $G$  be any  grammar in the family $\ftaggable(k, \Phi)$ and $\overline G$ 
 its tagged version. Let $\ltaggable(\Phi)= L(G_\Phi)$  (respectively 
 $\ltagged(\Phi)=L(\overline G_\Phi)$) 
 be the max-languages.
 The following inclusions hold:
\begin{center}
$
L(\overline G) \subseteq \ltagged(\Phi),\quad L(G) \subseteq  
\ltaggable(\Phi).
$
\end{center}
\end{theorem}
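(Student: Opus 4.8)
The plan is to prove the first inclusion $L(\overline G)\subseteq\ltagged(\Phi)$ and to deduce the second one at once: since $\ltaggable(\Phi)=\sigma(\ltagged(\Phi))$ by Definition~\ref{defSLRlanguage} and $L(G)\subseteq\sigma(L(\overline G))$ by the remark following Definition~\ref{def:taggedGrammar}, applying $\sigma$ to the first inclusion gives $L(G)\subseteq\sigma(L(\overline G))\subseteq\sigma(\ltagged(\Phi))=\ltaggable(\Phi)$. So I fix a tagged word $\omega\in L(\overline G)$ together with a derivation tree $T$ of $\omega$ in $\overline G$, and aim to exhibit a reduction $\csharp\,\omega\,\csharp\stackrel{*}{\leadsto}_\Phi\csharp\odot\csharp$.

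The engine of the proof is a closure property of the tagged rules. Because $\rho(Y)=\{Y\}\cup\Delta$ for every $Y\in V_N$ and $\overline R_X=\rho(R_X)\cap R$, whenever a nonterminal $Y$ fills a slot of a right part of a rule of $\overline P$, replacing that single occurrence by its \emph{companion tag} --- $[$ if the slot is first, $]$ if last, $\odot$ if internal --- keeps the word inside $R$, hence inside the same $\overline R_X$. Call such a replacement a \emph{collapse}. It follows that if a complete subtree of $T$ rooted at $Y$ is replaced by the companion tag of $Y$'s slot in its parent rule, the word read off the resulting tree is still generated by $\overline G$; and the same holds for any set of complete subtrees collapsed simultaneously. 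In particular, collapsing all the nonterminal children of a node $Y$ turns its yield into a handle $[\,a_1\odot\cdots\odot a_m\,]$.

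Now I reduce $\csharp\,\omega\,\csharp$ bottom-up along $T$: at each step I choose an innermost nonterminal node all of whose nonterminal children are already collapsed, so that its yield is a handle $[u]$ with $u\in(\Sigma-\{\#\})(\odot(\Sigma-\{\#\}))^*$, and I collapse it to the companion tag of its slot. The crucial invariant is that every intermediate word $\xi$ is itself in $L(\overline G)$ --- it is read off the tree obtained from $T$ by the collapses performed so far, which is a legal derivation of $\overline G$ by the previous paragraph. Since $G\in\ftaggable(k,\Phi)$, Definition~\ref{defHOP} then gives $\varphi_k(\csharp\,\xi\,\csharp)\subseteq\Phi$, i.e.\ $\xi\in\Loc(\Phi)$; the successor $\xi'$ is in $\Loc(\Phi)$ for the same reason, so the collapse is a legal reduction $\xi\leadsto_\Phi\xi'$ in the sense of Definition~\ref{defSLRlanguage}, and by non-conflictuality of $\Phi$ the tag it inserts is forced to be exactly the companion tag. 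As $T$ is finite the process halts after collapsing the root (axiom) handle, which --- being the unique tag $s$ with $\csharp\, s\,\csharp\in\Loc(\Phi)$ --- must be $\odot$; hence $\csharp\,\omega\,\csharp\stackrel{*}{\leadsto}_\Phi\csharp\odot\csharp$ and $\omega\in\ltagged(\Phi)$.

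I expect the combinatorial heart --- the companion-tag closure and the ``$\xi\in L(\overline G)$'' invariant --- to be routine, and the main obstacle to lie in the end-word bookkeeping. One has to match the two bracketing conventions: a word of $L(\overline G)$ begins with $[$ and ends with $]$, so its outermost brackets must be identified with the wrapping $[\;\cdot\;]$ of $\csharp[\,w\,]\csharp$ in Definition~\ref{defSLRlanguage}, and one must know that $\csharp\odot\csharp\in\Loc(\Phi)$ so that the final collapse is legal. Both facts concern only the fixed end-words $\csharp\in(\#\odot)^*\#$, which are common to $G$ and to the max-grammar $\overline{G}_\Phi$; since Lemma~\ref{thReductionsEquivGrammar} already drives the reductions of $\ltagged(\Phi)$ down to $\csharp\odot\csharp$, these boundary tagged $k$-words are guaranteed to belong to $\Phi$, and the reconciliation can be carried out exactly as there. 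As an alternative route, one may instead relabel each node of $T$ by the look-back/look-ahead $(k-1)$-words of its yield and check, using the $\ftaggable(k,\Phi)$ hypothesis, that the relabeled tree is a valid derivation of $\overline{G}_\Phi$, whence $L(\overline G)\subseteq L(\overline{G}_\Phi)=\ltagged(\Phi)$ directly by Lemma~\ref{thReductionsEquivGrammar}.
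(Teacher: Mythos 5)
Your proof is correct, but its main route is genuinely different from the paper's. The paper never touches the reduction relation in this argument: it simulates a derivation of $\overline G$ step by step with a derivation of the max-grammar $\overline{G}_\Phi$ of Definition~\ref{defMaxGrammarConstruction}, replacing each nonterminal occurring in a sentential form $uX_1v$ by the $4$-tuple nonterminal of $\overline{G}_\Phi$ determined by the left and right contexts of the handle it generates, and then concludes $L(\overline G)\subseteq L(\overline{G}_\Phi)=\ltagged(\Phi)$ by invoking Lemma~\ref{thReductionsEquivGrammar}; this is essentially the ``alternative route'' you sketch in your last sentence. Your primary argument instead bypasses the max-grammar altogether and manufactures the reduction sequence $\csharp\,\omega\,\csharp\stackrel{*}{\leadsto}_\Phi\csharp\odot\csharp$ directly from the derivation tree, with the companion-tag collapse (legitimate because $\rho(Y)\supseteq\Delta$ and the intersection with $R$ forces the right tag) supplying the key invariant that every intermediate word stays in $L(\overline G)$ and hence, by the $\ftaggable(k,\Phi)$ hypothesis, in $\Loc(\Phi)$. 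What your version buys is self-containedness: it needs neither the max-grammar construction nor Lemma~\ref{thReductionsEquivGrammar}, and it isolates the single property of tagged grammars on which the theorem rests. What the paper's version buys is the extra structural information, stated explicitly at the end of its proof, that the two derivations have the same length and isomorphic trees differing only in nonterminal names --- i.e., that $G$ and $G_\Phi$ are structurally compatible, which is what entitles $\ltaggable(\Phi)$ to be called the \emph{top} element of the family in Corollary~\ref{Cor:BoolAlgebra}; your reduction-based argument yields the set inclusion but leaves that tree isomorphism implicit. Your attention to the boundary bookkeeping (identifying the outermost brackets of a word of $L(\overline G)$ with the wrapping $\csharp[\,\cdot\,]\csharp$, and noting that the all-$\#$ tagged $k$-word lies in $\Phi$ so that the final collapse to $\csharp\odot\csharp$ is legal) addresses real, if minor, gaps that the paper's hint glosses over.
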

\proof (Hint)
Let $G=(V_N,\Sigma, P,  S)$, $\overline G=(V_N,\Sigma\cup \Delta, \overline P,  S)$, $G_\Phi=(V'_N, \Sigma, P', S' )$ and $\overline G_\Phi=(V'_N, \Sigma \cup \Delta, \overline P', S')$. 
We prove that if, for $X \in S$,  $X \stackrel + \Longrightarrow_{\overline{G}} w $ then, for some $Y' \in S'$, 
$Y' \stackrel + \Longrightarrow_{\overline{G}_\Phi} w$; we may assume both derivations are leftmost.
If
$\ X  \stackrel + \Longrightarrow_{\overline{G}} u X_1 v 
\Longrightarrow_{\overline{G}}  u w_1 v = w $
then $w_1$ is the leftmost handle in $w$, and by definition of maxgrammar, 
there exists a derivation
$u Z' v \Longrightarrow_{\overline{G}} u w_1 v $
where $Z'$ is the 4-tuple 
$( t_k(\csharp u), i_k(w_1 v \csharp), t_k(\csharp u w_1),$ $i_k(v \csharp ) )$.
\par
Then, after the reduction or the derivation step, the position of the leftmost
handle in $u X_1 v$ and in $u Z' v$ coincide, and we omit the simple inductive 
arguments that completes the proof.
\par
Clearly, the two derivations of $\overline{G}$ and of $\overline{G}_\Phi$ have the same length and create isomorphic trees, which only differ in the nonterminal names.
By applying the projection $\sigma$ to both derivations, the inclusion $L(G) \subseteq  \ltaggable(\Phi)$ follows.
\qed

\par
Thus,  for each set of  tagged $k$-words $\Phi$, the
max-language $L(G_\Phi)$ includes all languages in $\ftaggable(k, 
\Phi)$, actually also any language in  $\ftaggable(k, 
\Phi')$, where $\Phi'\subseteq \Phi$.

\par
To prove the Boolean closure of $\ftaggable(k, 
\Phi)$, we need the following lemma (the tedious proof is omitted)
which extends Theorem 5 of Knuth \cite{KnuthLR67} from CF to ECF grammars.
\begin{lemma}\label{lm:ComplementECF}
	Let  $G_{(\,),1}$ and 
	$G_{(\,),2}$ be  ECF parenthesis 
	grammars. Then there exists an ECF parenthesis grammar $G_{(\,)}$ such 
	that  
	$L(G_{(\,)}) = L(G_{(\,),1}) - L(G_{(\,),2})$. 
\end{lemma}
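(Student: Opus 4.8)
The plan is to follow Knuth's proof for parenthesis grammars \cite{KnuthLR67}, recasting it in the language of bottom-up tree (hedge) automata so that the regular right-hand sides of ECF rules are accommodated. The starting observation is that, because parentheses are explicit, every word $w \in L(G_{(\,)})$ encodes a unique \emph{skeleton tree}: the matching of the inserted parentheses determines how $w$ decomposes into a top block and parenthesized sub-blocks, recursively. Hence the parenthesis language of an ECF grammar is a recognizable set of unranked trees, where each internal node carries, as its horizontal constraint, the regular language $R_X$ of the rule applied at that node. This is the only point where the passage from CF to ECF matters: in Knuth's original argument the horizontal constraint at each node is a finite set, whereas here it is the regular language accepted by the unambiguous FA $M_X$.

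First I would build, for each grammar $G_{(\,),i}$, a deterministic and complete bottom-up hedge automaton $\mathcal{A}_i$ recognizing its set of skeleton trees. The state assigned to a parenthesized block is the set of nonterminals of $G_{(\,),i}$ from which that block can be derived; it is computed bottom-up by reading the block's content and, for each rule $X \to R_X$, running $M_X$ over the sequence of child states, where a child state (a set $T$ of nonterminals) may match any transition of $M_X$ labeled by a terminal or by some $Y \in T$. For a fixed block this ``does there exist a witness assignment'' question has a deterministic answer per rule, so collecting the left-hand sides of the accepting rules yields the block's state set as a deterministic function of the content; adjoining the empty set as a sink state for malformed blocks makes the automaton total. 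The accepting states of $\mathcal{A}_i$ are those containing an axiom of $G_{(\,),i}$.

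Then I would take the synchronous product $\mathcal{A}_1 \times \mathcal{A}_2$ over the shared tree structure, declaring a product state accepting exactly when its first component is accepting for $G_{(\,),1}$ and its second component is rejecting for $G_{(\,),2}$; by the standard closure of recognizable tree languages under product and complement that determinism and completeness afford, this automaton recognizes precisely the trees of $L(G_{(\,),1}) - L(G_{(\,),2})$. Finally, I would read the product automaton back as an ECF parenthesis grammar: its nonterminals are the product states, its axioms are the accepting states, and the right part of the rule for a state $p$ is the horizontal language of child-state sequences that drive the bottom-up transition into $p$, enclosed between parentheses. This language is obtained from a product of the horizontal FAs, hence is regular and can be presented by an unambiguous FA, so the construction stays inside the ECF parenthesis format and delivers the required $G_{(\,)}$ with $L(G_{(\,)}) = L(G_{(\,),1}) - L(G_{(\,),2})$.

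The main obstacle is the horizontal construction. After the vertical pass the child alphabet consists of sets of nonterminals, and such a set enables every $M_X$-transition labeled by one of its members; turning the resulting nondeterministic horizontal behaviour into a uniform deterministic bottom-up step across all rules, keeping a sink so the pass is total, and then checking that the grammar recovered in the last step has genuinely regular, unambiguous, copy-rule-free right parts, is exactly the bookkeeping that makes the full proof tedious. None of it raises a conceptual difficulty once the tree-automaton viewpoint is adopted, which is why the details are omitted.
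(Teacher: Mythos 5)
Your proposal is correct, but it reaches the result by a genuinely different route from the paper. The paper's proof (in the appendix) is a direct extension of Knuth's Theorem~5: it builds a subtraction grammar whose nonterminals are pairs $[A,\mathcal{B}]$ with $A$ a nonterminal of $G_{(\,),1}$ and $\mathcal{B}$ a \emph{set} of nonterminals of $G_{(\,),2}$, defines a word-level ``difference'' operator $\dot-$ on right parts, and localizes the only genuinely new work for the ECF case in one lemma: that $\dot-$ preserves regularity of the right-hand-side languages, which is shown with a small two-state nondeterministic transducer reading the two words in parallel. You instead lift everything to unranked (hedge) tree automata over the skeleta determined by the parentheses: determinize both grammars by the subset construction (with horizontal languages given by Boolean combinations of the substituted $R_X$'s, hence still regular), take the product with the acceptance condition ``accepted by $\mathcal{A}_1$ and rejected by $\mathcal{A}_2$'', and read the product back as a parenthesis ECF grammar. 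The two constructions are cousins --- Knuth's $\mathcal{B}$-component is precisely a subset construction on the $G_{(\,),2}$ side, while keeping $G_{(\,),1}$'s nonterminal as a single, nondeterministically guessed component --- but your framing buys modularity and makes the Boolean closure transparent at the tree level, at the price of setting up the grammar-to-hedge-automaton correspondence and the horizontal subset construction for regular right parts; the paper's route stays closer to the grammar syntax and needs only the one regularity-preservation argument. Both proofs defer comparable bookkeeping (yours in the horizontal pass, the paper's by citing Knuth's pp.~285--286 verbatim), and in both cases one should note, as the trimming step implicitly guarantees, that the useful part of the resulting grammar inherits the operator form and unambiguous-FA presentation required of ECF grammars in this paper.
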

\begin{theorem}\label{th:BoolClosure}
For every $k$ and $\Phi \subset \Sigma^{\square k}$,  the language family  $\ftaggable(k, \Phi)$ is closed under union, intersection and under relative complement, i.e., $L_1 - L_2 \in \ftaggable(k, \Phi)$ if $L_1 , L_2 \in \ftaggable(k, \Phi)$.
\end{theorem}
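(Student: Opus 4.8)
The plan is to prove the three closure operations by working at the level of tagged parenthesis grammars, exploiting the fact that all grammars in $\ftaggable(k,\Phi)$ share the same set $\Phi$ of tagged $k$-words and hence the same bracketing structure. The key structural observation is that, for a grammar $G \in \ftaggable(k,\Phi)$ with tagged version $\overline G$, the tagged words in $L(\overline G)$ carry an explicit parenthesis-like structure encoded by the tags $[,],\odot$; by Lemma~\ref{thReductionsEquivGrammar} and Theorem~\ref{th:InclusionMaxlang}, every such $L(\overline G)$ sits inside $\ltagged(\Phi) = L(\overline G_\Phi)$, and the projection $\sigma$ recovers $L(G) \subseteq \ltaggable(\Phi)$. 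The crucial point I want to use is that $\sigma$ restricted to $\Loc(\Phi)$ is injective (the remark after Definition~\ref{defLSTlist}, and Lemma~\ref{lemmaSLRproperties}(1)): each terminal word has at most one tagged preimage in $\Loc(\Phi)$. This means that Boolean operations on the terminal languages $L_1,L_2$ correspond faithfully to Boolean operations on their tagged counterparts $\overline L_1, \overline L_2 \subseteq \ltagged(\Phi)$.

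First I would handle union, which is essentially free: given $G_1,G_2 \in \ftaggable(k,\Phi)$, I would form a grammar for $L_1 \cup L_2$ by taking a disjoint union of the two rule sets (renaming nonterminals) with the combined axiom set. The only thing to verify is that the resulting grammar is still in $\ftaggable(k,\Phi)$, i.e.\ that $\varphi_k(\csharp L(\overline G)\csharp)$ remains a subset of the common non-conflictual $\Phi$; since every sentence of the union grammar is a sentence of one of the two original tagged grammars, no new tagged $k$-word appears, so Condition~\eqref{eqConflict} is preserved with the same $\Phi$. Next I would reduce intersection and relative complement to a single construction via the tagged parenthesis grammars. For each $G_i$ I would pass to the tagged parenthesis grammar $\overline G_{i,(\,)}$ obtained by bracketing the rules of $\overline G_i$. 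Because the two grammars share $\Phi$, they assign \emph{the same} tagged structure to any common terminal word, so $\sigma'$-equivalent parenthesizations coincide on the overlap; this is what lets set operations on structured words be carried out grammar-theoretically.

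The main engine is Lemma~\ref{lm:ComplementECF}: for the relative complement $L_1 - L_2$, I would apply it to $\overline G_{1,(\,)}$ and $\overline G_{2,(\,)}$ to obtain an ECF parenthesis grammar $G_{(\,)}$ with $L(G_{(\,)}) = L(\overline G_{1,(\,)}) - L(\overline G_{2,(\,)})$, then strip the parentheses and the tags to recover a grammar for $L_1 - L_2$. By the injectivity of $\sigma$ on $\Loc(\Phi)$, the tagged difference projects exactly onto the terminal difference: a terminal word lies in $L_1 - L_2$ iff its unique tagged representative lies in $\overline L_1$ but not in $\overline L_2$, which is precisely the structured difference computed by the lemma. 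Intersection then follows either by the identity $L_1 \cap L_2 = L_1 - (L_1 - L_2)$ using closure under relative complement twice, or directly by an analogous application of the construction. In every case I must re-check that the output grammar lies in $\ftaggable(k,\Phi)$ with the same $\Phi$; this holds because the difference/intersection grammar generates only tagged words already generated by $\overline G_1$, so its set of tagged $k$-words is a subset of $\Phi$ and remains non-conflictual.

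The hard part will be making the correspondence between structured (tagged) words and terminal words fully rigorous across the grammar transformations, rather than the set-theoretic bookkeeping. Specifically, the delicate step is verifying that $\overline G_{1,(\,)}$ and $\overline G_{2,(\,)}$ are \emph{structurally comparable}, i.e.\ that sharing $\Phi$ forces any terminal word common to $L_1$ and $L_2$ to receive identical tag-annotations (hence identical parenthesizations) in both grammars, so that the parenthesis-grammar difference of Lemma~\ref{lm:ComplementECF} does not spuriously discard or retain words. This rests on the uniqueness of the tagged preimage in $\Loc(\Phi)$ together with Theorem~\ref{th:InclusionMaxlang}, which confines both $L(\overline G_1)$ and $L(\overline G_2)$ inside the single max-language $\ltagged(\Phi)$; once this common structural embedding is established, the closure properties transfer from the ECF parenthesis-grammar level back to the terminal level by applying $\sigma'$ and then $\sigma$, and the routine verification that the resulting grammar stays in operator form and keeps the same $\Phi$ completes the argument.
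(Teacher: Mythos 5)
Your overall strategy is the paper's: union by a disjoint union of the rule sets, relative complement via Lemma~\ref{lm:ComplementECF} applied to parenthesis grammars followed by erasure of the added structure, intersection by De Morgan, and preservation of the $\ftaggable(k,\Phi)$ property because the resulting language sits inside $L_1$, so its tagged $k$-words form a non-conflictual subset of $\Phi$. Your explicit justification of the ``delicate step'' --- that the shared non-conflictual $\Phi$ forces a word of $L_1\cap L_2$ to receive the same structure in both grammars, via uniqueness of tagged preimages in $\Loc(\Phi)$ --- is exactly the structural-compatibility point the paper relies on (tersely) together with Theorem~\ref{th:structUnambiguityDET}.

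One step fails as written, though: you feed Lemma~\ref{lm:ComplementECF} the parenthesis versions of the \emph{tagged} grammars $\overline G_1,\overline G_2$ and then strip parentheses and tags. The tagged grammar of Definition~\ref{def:taggedGrammar} is only a device for harvesting tagged $k$-words; it overgenerates under projection, i.e.\ $\sigma(L(\overline G))\supseteq L(G)$ with strict inclusion in general, because a rule of $\overline G$ may replace a nonterminal by a single tag (see the paper's Example~\ref{ex:opgram}: $[b]\in L(\overline G_1)$ there, while $\sigma([b])\notin L(G_1)$). Consequently $L(\overline G_1)-L(\overline G_2)$ contains partially-collapsed words of $\overline G_1$ that merely happen not to be generated by $\overline G_2$, and their projections need not lie in $L_1$ at all; there is no bijection between $L(\overline G_i)$ and $L_i$ through which the set difference could be transported, so applying $\sigma$ to the difference does not yield $L_1-L_2$. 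The paper avoids this by applying Lemma~\ref{lm:ComplementECF} to the parenthesis grammars $G_{(\,),1},G_{(\,),2}$ of the \emph{original} untagged grammars: structural unambiguity (Theorem~\ref{th:structUnambiguityDET}) gives a genuine bijection between $L_i$ and $L(G_{(\,),i})$, the shared $\Phi$ guarantees that common sentences carry the same parenthesization, and the tags are used only afterwards to check that the parenthesis-stripped grammar is still in $\ftaggable(k,\Phi)$. Replacing your tagged parenthesis grammars by these plain parenthesis grammars repairs the argument; the rest of your proof goes through.
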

\begin{proof} Let $L_i= L(G_i)$ where $G_i=(V_{N_i},\Sigma, P_i, S_i)$, for 
$i=1,2$. 
 We assume that the nonterminal names of the two grammars are disjoint.
\par
\textbf{Union}.  The grammar $G=(V_{N_1}\cup V_{N_2},\Sigma, P_1 \cup P_2, S_1 \cup S_2)$  generates $L(G_1) \cup  L(G_2)$ and  is  in   $\ftaggable(k, \Phi)$, since its set of tagged $k$ grams is $\Phi$.
\par
\textbf{Complement}. Let $G_{(\,),i}$ be the parenthesis grammar of $G_i$, 
$i=1, 2$, and by Lemma \ref{lm:ComplementECF} let $G_{(\,)}=(V_N, \Sigma, P, 
S)$ be the parenthesis grammar such that  $L(G_{(\,)}) = L(G_{(\,),1}) - 
L(G_{(\,),2})$.
Since $G_1$ and $G_2$ are structurally unambiguous by Theorem \ref{th:structUnambiguityDET},  there exists a bijection between the sentences of $L_i$ and $L(G_{(\,),i})$, $i=1, 2$.   
\par
Define the grammar $G=(V_N, \Sigma, P, S)$ obtained from $G_{(\,)}$ by erasing the parentheses from each rule right part. It is obvious that   $L(G)=L_1 - L_2$ since, if  $x, y$ are sentences of  $G_{(\,)}$ and $\sigma(x)=\sigma(y)$, then $x=y$.
It remains to prove that $G$ has the $\taggable(k)$ property. Each sentence of $L(G)$ corresponds to one, and only one, sentence of $G_{(\,)}$.
Since $L(G) \subseteq L_1$, the tagged $k$-words of grammar $G$ are  a subset of the tagged $k$-words $\Phi$ of grammar $G_1$, which by hypothesis are not conflictual.
The closure under intersection is obvious.
\end{proof}
Combining  Theorem \ref{th:InclusionMaxlang} and Theorem \ref{th:BoolClosure}, we have:
\begin{corollary}\label{Cor:BoolAlgebra}
For every $k$ and $\Phi \subset \Sigma^{\square k}$,   the language family  
$\ftaggable(k, \Phi)$ is a Boolean algebra having as top element the 
max-language $\ltaggable(\Phi)$.
\end{corollary}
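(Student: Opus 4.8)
The plan is to recognise that, once Theorems~\ref{th:InclusionMaxlang} and~\ref{th:BoolClosure} are in hand, the corollary is essentially the statement that $\ftaggable(k, \Phi)$, regarded as a collection of subsets of the fixed universe $\ltaggable(\Phi)$, is a \emph{field of sets}: a family closed under union, intersection, and complement relative to a top element, and containing a bottom element. Once that is verified, the Boolean-algebra axioms (associativity, commutativity, distributivity, the absorption and De~Morgan laws, and the complement laws) hold automatically, because $\cup$, $\cap$ and the complement are interpreted as the ordinary set operations on subsets of $\ltaggable(\Phi)$, and any field of sets is a Boolean algebra under these operations.

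First I would pin down the top element. By Lemma~\ref{thReductionsEquivGrammar} we have $\ltaggable(\Phi) = L(G_\Phi)$, and the max-grammar construction of Definition~\ref{defMaxGrammarConstruction} exhibits $G_\Phi$ as a grammar in $\ftaggable(k, \Phi)$, so that $\ltaggable(\Phi) \in \ftaggable(k, \Phi)$. By Theorem~\ref{th:InclusionMaxlang}, every $L \in \ftaggable(k, \Phi)$ satisfies $L \subseteq \ltaggable(\Phi)$. Hence $\ltaggable(\Phi)$ is the maximum of the family, i.e. the top element $\top$, and in particular $\ftaggable(k, \Phi) \subseteq \mathcal{P}(\ltaggable(\Phi))$.

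Next I would read off the closure properties from Theorem~\ref{th:BoolClosure}. Closure under union and intersection is immediate. For complementation relative to the top, observe that for any $L \in \ftaggable(k, \Phi)$ both $\top = \ltaggable(\Phi)$ and $L$ lie in the family, so the relative complement $\top - L$ again lies in $\ftaggable(k, \Phi)$ by Theorem~\ref{th:BoolClosure}; since $L \subseteq \top$, this $\top - L$ is precisely the Boolean complement of $L$ inside the universe $\ltaggable(\Phi)$. Specialising to $L = \top$ yields the bottom element $\top - \top = \emptyset \in \ftaggable(k, \Phi)$. Thus $\ftaggable(k, \Phi)$ contains $\top$ and $\emptyset$ and is closed under $\cup$, $\cap$ and complement relative to $\top$, so it is a field of sets and therefore a Boolean algebra whose top element is the max-language $\ltaggable(\Phi)$.

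I expect the only genuinely delicate step to be the first one: confirming that the max-language itself lies in $\ftaggable(k, \Phi)$, so that it can serve both as the top element and as the $L_1$ argument of the relative complement. This is not merely the inclusion of Theorem~\ref{th:InclusionMaxlang} but the stronger fact that $G_\Phi$ is a bona fide HOP($k$) grammar whose tagged $k$-words are the (non-conflictual) set $\Phi$; it rests on the symmetrical-automaton and max-grammar construction together with Lemma~\ref{thReductionsEquivGrammar}. Everything else is a routine unwinding of the field-of-sets definition, requiring no new combinatorial argument about tagged words beyond what Theorem~\ref{th:BoolClosure} already provides.
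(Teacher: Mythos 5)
Your proposal is correct and matches the paper's argument, which simply states the corollary as the combination of Theorem~\ref{th:InclusionMaxlang} (every language in the family is contained in the max-language) and Theorem~\ref{th:BoolClosure} (closure under union, intersection and relative complement); your field-of-sets unwinding is exactly what that combination amounts to. You are also right that the one non-routine ingredient is that $\ltaggable(\Phi)$ itself belongs to $\ftaggable(k,\Phi)$, which the paper supplies via the max-grammar construction of Definition~\ref{defMaxGrammarConstruction} together with Lemma~\ref{thReductionsEquivGrammar}.
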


Our last result reaffirms for the HOP languages a useful  property of OP languages.  
\begin{theorem}\label{th:reg-intersection}
    For every $k$ and $\Phi \in \Sigma^{\square k}$, the language family HOP$(k, \Phi)$
    is closed under intersection with regular languages.
\end{theorem}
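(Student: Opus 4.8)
The plan is to reuse the product (``triple'') construction that proves closure of context-free languages under intersection with regular sets, and then to argue, exactly as in the complement case of Theorem~\ref{th:BoolClosure}, that this construction only refines nonterminal names and therefore cannot create any new tagged $k$-word.

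Let $L=L(G)\in\ftaggable(k,\Phi)$, with $G=(V_N,\Sigma,P,S)$ an ECF grammar in operator form, and let the regular language $K$ be recognized by a complete DFA $B=(\Sigma,Q,\delta_B,q_0,F)$. I would build a grammar $G'$ whose nonterminals are triples $[p,X,q]\in Q\times V_N\times Q$, to be read as ``$X$ generates a word that drives $B$ from $p$ to $q$''. For each rule $X\to M_X$ of $P$ and each pair $(p,q)$, the right part of $[p,X,q]$ is the product FA $M_{X,p,q}$ obtained by running $M_X$ in parallel with $B$: a terminal $a$ simultaneously advances $M_X$ and moves the $B$-component by $\delta_B(\cdot,a)$, while a nonterminal $Y$ advances $M_X$ and is relabelled by the triple $[p',Y,q']$, where $p'$ is the current $B$-state and the $B$-component jumps to a guessed $q'\in Q$. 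The initial and final states of $M_{X,p,q}$ pair those of $M_X$ with $p$ and $q$, and the axioms of $G'$ are the triples $[q_0,S_0,q_f]$ with $S_0\in S$ and $q_f\in F$. Since the product keeps terminals as terminals and sends nonterminals to nonterminals position by position, $G'$ is again in operator form, and a routine induction (the ECF analogue of the triple construction) yields $L(G')=L\cap K$.

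It remains to show $G'\in\ftaggable(k,\Phi)$. Erasing the two $B$-states from every triple turns each derivation tree of $G'$ into a derivation tree of $G$ while preserving the tree shape and the terminal frontier; hence the $G'$-trees form a copy of a subfamily of the $G$-trees. A tagged $k$-word records only terminals together with the structural tags $[$, $]$, $\odot$, and these tags are fixed by the tree shape and are oblivious to nonterminal names; consequently the tagging commutes with the annotation and $\varphi_k\!\left(\csharp\,L(\overline{G'})\,\csharp\right)\subseteq\varphi_k\!\left(\csharp\,L(\overline G)\,\csharp\right)\subseteq\Phi$. A subset of a non-conflictual set is non-conflictual, so $G'$ satisfies Condition~\eqref{eqConflict}; having tagged $k$-words contained in $\Phi$, it lies in $\ftaggable(k,\Phi)$, precisely as in the complement argument. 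Since $L(G')=L\cap K$, the family is closed under intersection with regular languages.

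The main obstacle is the commutation claim just invoked: one must check that tag insertion — the substitution $\rho$ of Definition~\ref{def:taggedGrammar} followed by the intersection with the control language $R$ of that definition — produces the same tag at corresponding positions of $\overline{G'}$ and of $\overline G$. This rests on two facts: $\rho$ acts on a terminal identically regardless of the surrounding nonterminal names, and the product construction neither merges nor splits positions of a right-hand side. I would formalize it by exhibiting, for each rule $[p,X,q]\to M_{X,p,q}$, the evident label-preserving morphism from its tagged FA onto the FA of the rule $X\to\overline R_X$ of $\overline G$, and then lifting this morphism to whole derivations. The remaining points — preservation of operator form and the identity $L(G')=L\cap K$ — are the routine parts inherited from the classical construction.
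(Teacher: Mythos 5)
Your argument is correct in outline, but it takes a genuinely different route from the paper's. The paper does not run a product construction on the given grammar $G$ at all: it first tags the regular language $R_0$ (via a substitution $\eta$ inserting all of $\Delta$ between consecutive terminals), forms the product of the \emph{symmetrical automaton} $A_\Phi$ for $\Loc(\Phi)$ with an FA for the tagged regular language, and then re-runs the max-grammar construction of Definition~\ref{defMaxGrammarConstruction} on that product to obtain a grammar $G_2\in\ftaggable(k,\Phi)$ with $L(G_2)=\ltaggable(\Phi)\cap R_0$. The theorem then follows from the Boolean algebra structure (Corollary~\ref{Cor:BoolAlgebra}): since $L(G_0)\subseteq\ltaggable(\Phi)$ by Theorem~\ref{th:InclusionMaxlang}, one has $L(G_0)\cap L(G_2)=L(G_0)\cap R_0$, and intersection is available inside $\ftaggable(k,\Phi)$. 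What the paper's detour buys is that the only grammar it must build by hand is for the \emph{max}-language, where the symmetrical automaton supplies a canonical machine to product against and the state components $(\beta,\alpha,q)$ slot directly into the existing nonterminal-naming scheme; the closure machinery then does the rest. What your direct triple construction buys is self-containment: you never invoke the max-grammar or the Boolean closure, and the HOP$(k,\Phi)$ membership of $G'$ reduces to the observation that tagged $k$-words depend only on tree shape and terminal frontier, not on nonterminal names --- the same argument the paper itself uses in the complement case of Theorem~\ref{th:BoolClosure}. You correctly isolate the one point needing care (that $\rho$ and the intersection with the control language $R$ commute with the renaming $[p,X,q]\mapsto X$, so that $L(\overline{G'})\subseteq L(\overline G)$); you should also note, as the paper implicitly does, that the resulting tagged $k$-word set may be a \emph{proper} subset of $\Phi$, which is harmless since a subset of a non-conflictual set is non-conflictual, and that the product FAs remain unambiguous because the $B$-component of a run is determined by the input labels.
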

\begin{proof} (Hint)
Let us consider a grammar $G_0 \in$ HOP$(k, \Phi)$ and a regular language $R_0$.
 We first add tags to $R_0$ through the language substitution 
 $\eta : \Sigma^2 \to \mathcal{P}(\Sigma \cdot \Delta \cdot  \Sigma)$, such that 
 $\eta(a b) = \{a\} \Delta \{b\}$. 
Consider the regular language
$R_1 = \eta(R_0)$ and an FA 
 $M_1 = (\Sigma \cup \Delta, Q_R, \delta_R, I_R, T_R)$ that recognizes $R_1$.
Let $A_\Phi$ be the symmetrical automaton recognizing $\Loc(\Phi)$. We  apply 
 the classic ``product'' construction
 for the language intersection of the two FA $A_\Phi$ and $M_1$;  let the product machine be
 $(\Sigma\cup \Delta, Q, \delta, I , T)$.
Note that a state of $Q$ consists of three components
$( \beta_1, \alpha_1, q_1)$: 
the look-back $\beta_1$ and  look ahead $\alpha_1$, where  $\beta_1, \alpha_1 \in (\Sigma \cup \Delta)^{k-1}$
come from the states of $A_\Phi$,  while the
 state $q_1$ comes from  $Q_R$.

 By extending the construction presented in Definition~\ref{defMaxGrammarConstruction}, we proceed now  to define the grammar
 $G_1 = (\Sigma \cup \Delta, V_N, P, S)$
 for $\ltagged(\Phi) \cap R_1$ as follows.
		
\noindent -- $V_N$ is a subset of $Q \times Q$ such that 
		$( \beta_1, [ \gamma_1,  q_1, \gamma_2 ], \alpha_2, q_2 ) \in V_N$, for some $\beta_1, \gamma_1, \gamma_2, 
		\alpha_2$, $q_1$, $q_2$.
		
\noindent --	$S \subseteq V_N$ and $X \in S$ if, and only if, 
		$X = ( \gamma_1 \odot \#, \alpha_1, q_1, \beta_1, \# \odot \gamma_2, q_2 
		)$, for some $\gamma_1, \gamma_2, \alpha_1, \beta_1$, $q_1 \in I_R$, $q_2 \in F_R$.
		
\noindent --		Each rule  $X \to M_X \in P$ is such that the right part is an FA 
        $M_X = \left(\Sigma\cup \Delta, Q_X, \delta_X, \{p_{I}\} , \{p_{T}\} \right)$ 
		where
		$Q_X \subseteq Q$.  
		(For each $X$  there exists only one $M_X$.)
		Let $X = ( \beta_X, \alpha_X, q_X, \beta'_X, \alpha'_X, q'_X )$. Then:
		$p_I = (\beta_X, \alpha_X, q_X)$, $p_T = (\beta'_X, \alpha'_X, q'_X)$,
		$\delta_X = (\delta \cup \delta') - \delta''$, \\
		$
		\delta' = \left\{ 
		( \beta_1, \alpha_3, q_1 ) 
		\stackrel
		{( \beta_1, \alpha_1, q_1, \beta_2, \alpha_2, q_2 )}
		{\longrightarrow}
		( \beta_3, \alpha_2, q_2 )
		\ \vrule
		\begin{array}{l}
		( \beta_1, \alpha_1, q_1, \beta_2, \alpha_2, q_2 ) \in V_N,\\
		( \beta_1, \alpha_3, q_1 ) 
		\stackrel
		{\odot}
		{\longrightarrow}
		( \beta_3, \alpha_2, q_2 )
		\in \delta
		\  \lor \\
		( \beta_1, \alpha_3, q_1 )  = p_I \land
		( \beta_1, \alpha_3, q_1 ) 
		\stackrel
		{[}
		{\longrightarrow}
		( \beta_3, \alpha_2, q_2 )
		\in \delta
		\  \lor \\
		( \beta_3, \alpha_2, q_2 )  = p_T \land
		( \beta_1, \alpha_3, q_1 ) 
		\stackrel
		{]}
		{\longrightarrow}
		( \beta_3, \alpha_2, q_2 )
		\in \delta
		\end{array}
		\right\}
		$\\
		$
		\delta'' =
		\begin{array}{l}
		\left\{
		q' \stackrel{[}{\longrightarrow} q'' \in \delta \ \vrule\  q' \ne p_I
		\right\}
		\cup
		\left\{
		q' \stackrel{]}{\longrightarrow} q'' \in \delta  \ \vrule\  q'' \ne p_T
		\right\}
		\cup\\
		\left\{
		q' \stackrel{x}{\longrightarrow} p_I \in \delta \right\}
		\cup
		\left\{
		p_T \stackrel{x}{\longrightarrow} q' \in \delta \right\}.
		\end{array}
		$

        It is easy to see that $L(G_1) = \ltagged(\Phi) \cap R_1$. If we remove tags by taking 
	$G_2 = (V_N, \Sigma, \{ X \to \sigma(R_X) \mid$ $ X \to R_X \in P \}, 
	S)$, we see that $G_2 \in$ HOP$(k, \Phi)$ by construction, and $L(G_2) = \ltaggable(\Phi) \cap R_0$.
        By Cor.~\ref{Cor:BoolAlgebra}, $L(G_0) \cap L(G_2) = L(G_0) \cap R_0$ is in HOP$(k,\Phi)$. 
\end{proof}

%
%
%
%
%
%
%

\section{Related work and conclusion}\label{sect:RelatedWorkConcl}
Earlier attempts have been made to generalize the \emph{operator precedence}  model and other similar 
grammar models. We discuss some relevant works and  explain how they differ 
from the \emph{higher-order operator precedence} model.  

Floyd  himself proposed the \emph{bounded-context} grammars
\cite{DBLP:journals/cacm/Floyd64}, which use left and right contexts of bounded
length to localize the edges of the handle; unfortunately,  the contexts
contain also nonterminals and so lose the closure properties of OP languages as
well as the possibility to do local parsing. 
\par
\emph{Chain-driven} languages 
\cite{CLMP17} are a recent extension of OP languages, which shares with HOP the idea of specifying the syntax 
structure by non-conflictual tags, but differs in technical ways we cannot 
describe here. The resulting family offers some  significant gain in expressive 
capacity over OP, enjoys local parsability, but it has poor closure properties, 
and cannot be easily formulated for  contexts larger than one terminal. 
Notice that the automata-theoretic approach presented in \cite{CLMP17} can be naturally applied to HOP 
languages for proving their local parsability.

	Since HOP extend the OP language family, which in turn include the input-driven 
	(or VP) language~\cite{CrespiMandrioli12} family,
	it is interesting to compare the HOP family with the recent extension of VP 
	languages, recognized by {\em tinput-driven pushdown automata} (TDPDA)
	\cite{Tinput-PDA-2015}, which enjoy similar closure properties. The families 
	HOP and TDPDA are incomparable: on one side,     the language  
	$\{a^n b a^n \mid n \geq 1\} \in \text{TDPDA} - \text{HOP}$, on the other side, \text{TDPDA} only recognize real-time languages, and thus fail the  non-realtime language which is 
	$\{a^m b^n c^n d^m \mid n, m \ge 1\} \cup \{a^m b^+ e d^m \mid m \ge 1\} 
	\in \text{HOP(3)}$.
Moreover the tinput parser is not suitable for local parsing, because it must 
operate from left to right, starting from the first character. 
 
\par
Recalling that OP grammars  have been applied in early grammar inference studies, we mention two loosely related  language classes  motivated by grammar 
inference research, which strives to discover expressive grammar types having 
good learnability properties. Within the so-called distributional approach, 
several authors have introduced various grammar types based on a common idea:  
that the syntax class of a word $v$ is determined by the left and right 
contexts of occurrence,  the context lengths  being finite integers $k$ and 
$\ell$. Two examples are: the $(k,\ell)$ \emph{substitutable} CF languages 
\cite{DBLP:conf/icgi/Yoshinaka08} characterized by the implication
$x_1 v y_1 u z_1 ,\, x_1 v y_2 u z_1 ,\, x_2 v y_1 u z_2  \in L$ $\text{ implies } x_2 v y_2 u z_2 \in L$ where  $|v|=k$ and $|u|= \ell$; 
and the related hierarchies of languages studied in \cite{DBLP:conf/icgi/LuqueL10}. A closer comparison of HOP and  language classes motivated  by grammar inference would be interesting. 
\medskip
\par
Since HOP is a new language model, its properties have been only partially studied. Thus, it remains to be seen whether other known theoretical properties of OP languages (such as the closure under concatenation and star or the invariance with respect to the CF non-counting property \cite{CreGuiMan81}) continue to hold for HOP. 
\par
We finish by discussing the potential for applications. First, the enhanced generative  capacity of higher degree HOP grammars in comparison to OP grammars may in principle  ease the task of writing syntactic specifications, but, of course, this needs to be evaluated for realistic cases.
We are confident that the practical parallel parsing algorithm in  \cite{BarenghiEtAl2015} can be extended from OP to HOP grammars.
\par
To apply  HOP to model-checking of infinite-state systems, the model has to be extended to $\omega$-languages and logically characterized, as recently done for OP languages in~\cite{LonatiEtAl2015}.
\par
Last, for grammar inference: we observe that it would be possible to define a partial order based on language inclusion, within each subfamily of HOP($k$) languages closed under Boolean operation, i.e., structurally compatible. 
Such a partially ordered set of grammars and languages, having the max-grammar as top element, is already known \cite{Crespi-ReghizziCACM73,Crespi-ReghizziMM1978} for the OP case,  and its lattice-theoretical properties  have been exploited for inferring grammars using just positive information sequences \cite{journals/csur/AngluinS83}. The availability of the $k$-ordered hierarchy may then enrich the learnable grammar space.
\bibliography{Floydbib}

\end{document}